\newcommand{\N}{\ensuremath{\mathbb{N}}}
\newcommand{\Z}{\ensuremath{\mathbb{Z}}}
\newcommand{\R}{\ensuremath{\mathbb{R}}}
\newcommand{\Ann}{\ensuremath{\text{Ann}}}
\newcommand{\inner}[2]{{\langle #1, #2 \rangle}}
\newcommand{\verteq}{\rotatebox{90}{$\,=$}}
\newcommand{\Patt}[2]{\ensuremath{\mathcal{L}_{#2}(#1)}}
\renewcommand{\vec}[1]{\mathbf{#1}}
\newcommand{\SFT}[1]{\ensuremath{\mathcal{V}(#1)}}
\newcommand{\Valid}[1]{\ensuremath{\mathcal{V}(#1)}}
\newcommand{\Lang}[1]{\ensuremath{\mathcal{L}(#1)}}
\newcommand{\nelio}[4]{\ensuremath{\mathrm{Bin}}\left(
\raisebox{0.5mm}[1mm][0mm]{\scalebox{0.8}{\ensuremath{\begin{array}{cc}
#3 & #4\\
#1 & #2\\
\end{array}}
}}\hspace*{-1mm} \right)}
\newtheorem{theorem}{Theorem}
\newtheorem*{theorem*}{Theorem}
\newtheorem{lemma}[theorem]{Lemma}
\newtheorem*{lemma*}{Lemma}
\newtheorem{corollary}[theorem]{Corollary}
\crefname{corollary}{Corollary}{Corollaries}
\crefname{prop}{Property}{Properties}
\newtheorem{proposition}[theorem]{Proposition}
\crefname{proposition}{Proposition}{Propositions}
\theoremstyle{definition}
\theoremstyle{remark}
\crefname{remark}{Remark}{Remarks}
\crefname{ex}{Example}{Examples}
\newtheorem*{conjecture}{Conjecture}
\title{Decidability and Periodicity of Low Complexity Tilings}
\renewcommand\@date{{%
  \vspace{-\baselineskip}%
  \vspace{-\baselineskip}%
  \large\centering
  \begin{tabular}{@{}c@{}}
    Jarkko Kari\textsuperscript{1} \\
    \normalsize jkari@utu.fi
  \end{tabular}%
  \quad and\quad
  \begin{tabular}{@{}c@{}}
    Etienne Moutot\textsuperscript{2}\\
    \normalsize etienne.moutot@math.cnrs.fr
  \end{tabular}

  \bigskip

  \normalsize

  \textsuperscript{1} University of Turku, Finland\par
  \textsuperscript{2} Aix-Marseille Université -- CNRS -- LIS, Marseille, France

  \bigskip
}}
\def\keywords{\xdef\@thefnmark{}\@footnotetext}
\begin{document}

\maketitle

\begin{abstract}
    In this paper we study colorings (or tilings) of the two-dimensional grid $\Z^2$.
    A coloring is said to be valid with respect to a set $P$ of $n\times m$ rectangular patterns if all $n\times m$ sub-patterns of the coloring are in $P$.
    A coloring $c$ is said to be of low complexity with respect to a rectangle if there exist $m,n\in\N$ and a set $P$ of $n\times m$ rectangular patterns such that $c$ is valid with respect to $P$ and $|P|\leq nm$.
    Open since it was stated in 1997, Nivat's conjecture states that such a coloring is necessarily periodic.
    If Nivat's conjecture is true, all valid colorings with respect to $P$ such that $|P|\leq mn$ must be periodic.
    We prove that there exists at least one periodic coloring among the valid ones.
    We use this result to investigate the tiling problem, also known as the domino problem, which is well known to be undecidable in its full generality.
    However, we show that it is decidable in the low-complexity setting.
    Then, we use our result to show that Nivat's conjecture holds for uniformly recurrent configurations.
    These results also extend to other convex shapes in place of the rectangle.\\
    After that, we prove that the $nm$ bound is multiplicatively optimal for the decidability of the domino problem, as for all $\varepsilon>0$ it is undecidable to determine if there exists a valid coloring for a given $m,n\in \N$ and set of rectangular patterns $P$ of size $n\times m$ such that $|P|\leq (1+\varepsilon)nm$.
    We prove a slightly better bound in the case where $m=n$, as well as constructing aperiodic SFTs of pretty low complexity.\\
    This paper is an extended version of a paper published in STACS 2020 \cite{stacs}.
\end{abstract}

\section{Introduction}

The tiling problem, also known as the domino problem, asks whether the two-dimensional grid $\Z^2$
can be colored in a way that avoids a given finite collection of forbidden local patterns.
The problem is undecidable in its full generality. The undecidability relies on the fact
that there are \emph{aperiodic} systems of forbidden patterns that enforce any valid coloring to be non-periodic~\cite{berger}.

An example of such systems are Wang tiles: square tiles with colored edges that can be placed next to each other if their abutting edge are matching.
In other words the forbidden patterns are all pairs of tiles with non-matching edges.
A set of tiles is called aperiodic if all its valid tilings are non periodic.
In this context, the minimum size of the alphabet (or number of tiles) for a tileset to be aperiodic is know to be 11 \cite{11tiles}.
However, if instead of the number of tiles we are interested in the number of local patterns that can appear in the tilings, we do not know what is the minimal number (or function) that gives an aperiodic SFT.

In this paper we first consider
the low complexity setup where the number of allowed local patterns is small. More precisely, suppose we are given at most $nm$
legal rectangular patterns of size $n\times m$, and we want to know whether there exists a coloring of $\Z^2$
containing only legal $n\times m$ patterns. We prove that if such a coloring exists then also
a periodic coloring exists (Corollary~\ref{cor:corperiodic}). This further implies, using standard arguments, that in this setup
there is an algorithm to determine if the given patterns admit at least one coloring of the grid (Corollary~\ref{cor:cordecidable}).
The results also extend to other convex shapes in place of the rectangle (see Section~\ref{sec:conclusions}).

Then, we investigate what can happen if the complexity slightly increases.
In order to better understand the boundaries of the undecidability of the domino problem in terms of pattern complexity, we consider what we call the \emph{pretty low complexity} case, where we prove that the domino problem becomes undecidable again for several bounds on the size of the set of allowed patterns. This pretty low complexity setting was introduced in \cite{karigoles}.
We show that the previous $nm$ bound is multiplicatively optimal, that is that for all $\varepsilon>0$, it is undecidable to determine whether it is possible to color a bi-infinite grid only using patterns from a given set of at most $(1+\varepsilon)nm$ allowed patterns of size $n\times m$.
In the case where $m=n$, we prove a slightly better bound where $(1+\varepsilon)nm$ is replaced with $n^2+f(n)n$ with $f:\N\rightarrow\N$ any unbounded computable function (\cref{cor:goles_2}).
We also obtain a construction of pretty low aperiodic SFTs (\cref{cor:goles_3}).

We believe the low complexity setting has relevant applications.
There are numerous examples of processes in physics, chemistry and biology where
macroscopic patterns and regularities arise from simple microscopic interactions. Formation of
crystals and quasi-crystals is a good example where physical laws govern locally the attachments of
particles to each other. Predicting the structure of the crystal from its chemical composition is a
notoriously difficult problem (as already implied by the undecidability of the tiling problem) but if the
number of distinct local patterns of particle attachments is sufficiently low, our results indicate
that the situation may be easier to handle. For a good reference on quasicrystal and aperiodic order, see \cite{aperiodicorder}.

Our work is also motivated by \emph{Nivat's conjecture}~\cite{nivat},
an open problem concerning periodicity in low complexity colorings of the grid.
The conjecture claims the following: if a coloring of $\Z^2$ is such that, for some $n,m\in\N$,
the number of distinct $n \times  m$ patterns is at most $nm$, then the coloring is necessarily
periodic in some direction. If true, this conjecture directly implies a strong form of our periodicity
result: in the low complexity setting, not only a coloring exists that is periodic, but in fact
all valid colorings are periodic.
Our contribution to Nivat's conjecture is that we show that under the hypotheses of the conjecture,
the coloring must contain arbitrarily large periodic regions (Theorem~\ref{thm:periodic}).


\section{Preliminaries}

We denote $\llbracket n,m \rrbracket=\{n,n+1,\dots ,m\}$ for integers $n\leq m$, and for any positive integer $n$ we set
$\llbracket n\rrbracket=\llbracket 0,n-1 \rrbracket$. We index the columns and rows of the $n\times m$
rectangle $\llbracket n\rrbracket\times \llbracket m\rrbracket$ by $0,\dots ,n-1$ and $0,\dots ,m-1$, respectively.
The $n\times m$ rectangle at position $\vec{u}\in\Z^2$ of the two-dimensional grid is
$\vec{u}+\llbracket n\rrbracket\times \llbracket m\rrbracket\subseteq\Z^2$.

Let $A$ be a finite alphabet. A coloring $c\in A^{\Z^2}$ of the two-dimensional grid $\Z^2$ with elements of $A$ is called a (two-dimensional) \emph{configuration}. We use the notation $c_{\vec{n}}$ for the color $c(\vec{n})\in A$ of cell $\vec{n}\in\Z^2$.
For any $\vec{t}\in\Z^2$, the \emph{translation} $\tau^{\vec{t}}:A^{\Z^2}\longrightarrow A^{\Z^2}$ by $\vec{t}$ is defined by $\tau^{\vec{t}}(c)_{\vec{n}}=c_{\vec{n}-\vec{t}}$, for all $c\in A^{\Z^2}$ and all $\vec{n}\in\Z^2$. If $\tau^{\vec{t}}(c)=c$ for a
non-zero $\vec{t}\in\Z^2$, we say that $c$ is \emph{periodic} and that $\vec{t}$ is a \emph{vector of periodicity}. If there are
two linearly independent vectors of periodicity then $c$ is \emph{two-periodic}, and in this case
there are horizontal and vertical vectors of periodicity $(k,0)$ and $(0,k)$ for some $k> 0$, and consequently a vector of periodicity in every rational direction.

A \emph{finite pattern} is a coloring $p\in A^D$ of some
finite domain $D\subset \Z^d$. For a fixed $D$, we call such $p$ also a \emph{$D$-pattern}.
The set $[p]=\{c\in A^{\Z^2}\ |\ c|_{D}=p\}$ of configurations that contain pattern $p$ in domain $D$ is
the \emph{cylinder} determined by $p$. We say that a pattern $p$ \emph{appears} in configuration $c$, or that $c$ \emph{contains} pattern $p$,
 if some translate $\tau^{\vec{t}}(c)$ of $c$ is in $[p]$.
 For a fixed finite $D$, the set of $D$-patterns that appear in a configuration $c$ is denoted by $\Patt{c}{D}$, that is,
\[ \Patt{c}{D}=\{\tau^{\vec{t}}(c)|_{D}\ |\ \vec{t}\in\Z^2\ \}. \]
We denote by $\Lang{c}$ the set of all finite patterns that appear in $c$, i.e., the union of
$\Patt{c}{D}$ over all finite $D\subseteq\Z^2$.

We say that $c$ has \emph{low complexity\/} with respect to shape $D$ if $|\Patt{c}{D}|\leq |D|$,
and we call $c$ a \emph{low complexity configuration\/} if it has low complexity with respect to some finite $D$.
\begin{conjecture}[Maurice Nivat 1997~\cite{nivat}]
Let $c\in A^{\Z^2}$ be a two-dimensional configuration. If $c$ has low complexity with respect to some rectangle $D=\llbracket n\rrbracket\times \llbracket m\rrbracket$ then $c$ is periodic.
\end{conjecture}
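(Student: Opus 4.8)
The plan is to recast the combinatorial low-complexity hypothesis as an algebraic constraint and then to exploit the rigidity of polynomial relations among the shifted copies of $c$. First I would embed the alphabet $A$ into $\mathbb{C}$ and encode the configuration as a formal power series $c(X,Y)=\sum_{(i,j)\in\Z^2}c_{(i,j)}X^iY^j$, letting a Laurent polynomial $f\in\mathbb{C}[X^{\pm1},Y^{\pm1}]$ act by the convolution product $fc$; the identity $fc=0$ then says precisely that a fixed finite linear combination of translates of $c$ vanishes at every cell. The first step is to prove that low complexity forces a nonzero annihilator to exist: from $|\Patt{c}{D}|\le|D|$ I would run a dimension count over the finitely many distinct $D$-patterns to produce a nontrivial linear relation among shifted restrictions of $c$, and then bootstrap this local relation into a global identity $fc=0$ with $f\neq0$.

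The second step is to study the annihilator ideal $\Ann(c)=\{f : fc=0\}$ and to extract periodicity from it. The structural fact I would aim for is that any configuration possessing a nonzero annihilator is a finite sum $c=c_1+\cdots+c_k$ of configurations, each periodic in a single direction, where the periods are read off from the line-polynomial factors $X^aY^b-1$ of a suitably chosen $f\in\Ann(c)$. Concretely, I would factor $f$ over $\mathbb{C}[X^{\pm1},Y^{\pm1}]$, argue that only the line-polynomial factors can witness a direction of periodicity, and use the primary structure of the ideal to isolate finitely many candidate directions together with their corresponding periodic summands.

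The third and decisive step is to upgrade this ``sum of periodic parts'' picture into genuine periodicity of $c$ under the sharp bound $nm$. Here I would try to show that the specific rectangle $D=\{1,\dots,n\}\times\{1,\dots,m\}$, together with the tight count $|\Patt{c}{D}|\le nm$, forces all the periodic components to share one direction (equivalently, that one may take $k=1$), so that $c$ itself is periodic. A more modest intermediate target --- and the one that appears to be genuinely attainable, cf.\ Theorem~\ref{thm:periodic} --- is that $c$ must at least contain arbitrarily large periodic regions, which should follow once one controls how a nonzero annihilator constrains the patterns occurring inside large windows.

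The hard part will be exactly this last step: converting componentwise periodicity into global periodicity at the exact threshold $|D|$. The algebraic machinery delivers a nonzero annihilator and a decomposition into periodic pieces with comparative ease, but ruling out the cancellations that can occur when periodic components of several different directions are summed is delicate, and it is precisely here that the full conjecture resists; I would therefore expect the bulk of the effort, and the eventual sticking point, to lie in quantitatively tying the number of distinct $n\times m$ patterns to the number and directions of the periodic summands.
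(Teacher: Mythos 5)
The statement you are trying to prove is Nivat's conjecture, which the paper explicitly presents as an \emph{open problem}: there is no proof of it in the paper, and the paper's actual results are strictly weaker (Theorem~\ref{thm:periodic} gives a periodic configuration in $\overline{{\cal O}(c)}$, Corollary~\ref{cor:corminimal} gives periodicity of $c$ itself only under the additional hypothesis of uniform recurrence). Your first two steps are sound and are exactly the Kari--Szabados machinery the paper quotes: low complexity yields a non-trivial annihilator (Lemma~\ref{th:low_complexity}), and a non-trivial annihilator yields one of the special form $(x^{i_1}y^{j_1}-1)\cdots(x^{i_m}y^{j_m}-1)$, equivalently a decomposition-into-periodic-pieces picture (Theorem~\ref{th:decompo}). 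But your third step --- forcing $k=1$, i.e.\ ruling out non-periodic sums of periodic components under the bound $|\Patt{c}{D}|\le nm$ --- is not an argument; it is a restatement of the conjecture. The algebraic decomposition genuinely cannot close this gap on its own: there exist non-periodic finitary power series that are sums of two periodic (non-finitary) power series in different directions, so any proof must inject the combinatorial threshold $nm$ in an essential way, and you give no mechanism for doing so. You correctly flag this as ``the sticking point,'' but that concession means the proposal does not constitute a proof.

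For comparison, the paper's route around this obstruction is dynamical rather than purely algebraic: it passes to a uniformly recurrent configuration in $\overline{{\cal O}(c)}$, uses the annihilator to eliminate one-sided directions of determinism (Theorem~\ref{thm:main}), and then splits into two cases --- determinism in all directions gives two-periodicity via Boyle--Lind (Proposition~\ref{prop:case1}), while a two-sidedly non-deterministic direction triggers the Cyr--Kra balanced-set argument (Proposition~\ref{prop:case2}), which is where the rectangle and the bound $nm$ actually enter. Even with all of this, the conclusion is only that \emph{some} configuration in the orbit closure is periodic, not that $c$ is. If you want a provable target, aim at Theorem~\ref{thm:periodic} or Corollary~\ref{cor:corminimal} rather than the conjecture itself.
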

\noindent
The analogous claim  in dimensions higher than two fails, as does an analogous claim in two dimensions
for many shapes other than rectangles~\cite{cassaigne}.

\subsection{Algebraic concepts}

Kari and Szabados introduced in~\cite{kariszabados} an algebraic approach to study low complexity configurations. The present paper heavily relies on this
technique. In this approach we replace the colors in $A$ by distinct integers, so that we assume $A\subseteq \Z$. We then express
a configuration $c\in A^{\Z^2}$ as a formal power series $c(x,y)$ over two variables $x$ and $y$ in which the coefficient of monomial $x^iy^j$ is $c_{i,j}$, for all $i,j\in\Z$. Note that the exponents of the variables range from $-\infty$ to $+\infty$. 
Note also that variables $x$ and $y$ in our power series and polynomials are treated only as ``position indicators'': in this work we never plug in any values to the variables.
In the following, polynomials may have negative powers of variables, that is, the polynomials we consider here and in the following are Laurent polynomials.
Let us denote by $\Z[x^{\pm 1}, y^{\pm 1}]$ and $\Z[[x^{\pm 1}, y^{\pm 1}]]$ the sets of such polynomials and power series, respectively.
We call a power series $c\in \Z[[x^{\pm 1}, y^{\pm 1}]]$ \emph{finitary} if its coefficients take only finitely many different values. Since we color the grid using finitely many colors, configurations are identified with finitary power series.

Multiplying a configuration $c\in \Z[[x^{\pm 1}, y^{\pm 1}]]$
by a monomial corresponds to translating it, and the periodicity of the configuration by vector $\vec{t}=(n,m)$ is then
equivalent to $(x^ny^m-1)c=0$, the zero power series. More generally, we say that a polynomial $f\in \Z[x^{\pm 1}, y^{\pm 1}]$ \emph{annihilates} power series $c$ if the formal product $fc$ is the zero power series. 

The set of polynomials that annihilates a power series is a Laurent polynomial ideal, and is denoted by
\[ \Ann(c) = \{ f\in \Z[x^{\pm 1}, y^{\pm 1}] ~|~ fc=0 \} .\]

It was observed in~\cite{kariszabados} that if a configuration
has low complexity with respect to some shape $D$  then it is annihilated by some non-zero polynomial $f\neq 0$.

\begin{lemma}[\cite{kariszabados}]
\label{th:low_complexity}
Let $c\in \Z[[x^{\pm 1}, y^{\pm 1}]]$ be a low complexity configuration.
Then $\Ann(c)$ contains a non-zero polynomial.
\end{lemma}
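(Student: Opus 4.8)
The plan is to show that low complexity with respect to a finite shape $D$ forces a linear dependence among finitely many translates of the configuration, and that this dependence is exactly what it means to be annihilated by a nonzero Laurent polynomial. Concretely, suppose $c$ has low complexity with respect to $D$, so that $|\Patt{c}{D}| \leq |D|$. The key observation is to consider, for each cell $\vec{d}\in D$, the translated configuration $\tau^{\vec{d}}(c)$, viewed as a vector whose entries are indexed by $\Z^2$. I want to read off, at each position $\vec{n}\in\Z^2$, the tuple $(\tau^{\vec{d}}(c)_{\vec{n}})_{\vec{d}\in D}$; by definition of translation this tuple is precisely the $D$-pattern occurring in $c$ at position $\vec{n}$ (up to reflection of the indexing), and low complexity says there are at most $|D|$ distinct such tuples.

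The main step is a counting/linear-algebra argument. Consider the $|D|$ translated configurations $\{\tau^{\vec{d}}(c) : \vec{d}\in D\}$ as elements of the vector space $\R^{\Z^2}$ (after embedding the integer colors in $\R$). I would restrict attention to the coordinate projection onto the finitely many positions where the distinct patterns are realized: since there are at most $|D|$ distinct $D$-patterns, I can pick a finite set $S\subseteq\Z^2$ of at most $|D|$ positions that already witnesses every pattern type, so that the values of the tuple $(\tau^{\vec{d}}(c)_{\vec{n}})_{\vec{d}}$ at any $\vec{n}$ coincide with its values at some point of $S$. Then the family of $|D|$ vectors in $\R^S$ obtained by restricting the translates lives in a space of dimension $|S|\leq |D|$. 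The slicker way to phrase this: form the $\Z^2$-indexed family of pattern-tuples; because only $|D|$ distinct tuples appear among $|D|$-dimensional vectors, the $|D|$ ``coordinate functions'' $\vec{n}\mapsto \tau^{\vec{d}}(c)_{\vec{n}}$ satisfy a nontrivial linear relation $\sum_{\vec{d}\in D} \lambda_{\vec{d}}\, \tau^{\vec{d}}(c) = 0$ with not all $\lambda_{\vec{d}}$ zero. This is where the hypothesis $|\Patt{c}{D}|\leq|D|$ is used decisively: we have at most $|D|$ distinct vectors spanning a space of dimension at most $|D|-1$ once we account for the fact that the pattern matrix has rank strictly less than $|D|$.

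Finally I would translate this linear relation back into polynomial language. Writing $f(x,y) = \sum_{\vec{d}\in D} \lambda_{\vec{d}}\, x^{d_1} y^{d_2}$ where $\vec{d}=(d_1,d_2)$, the relation $\sum_{\vec{d}} \lambda_{\vec{d}}\, \tau^{\vec{d}}(c) = 0$ becomes exactly $f\cdot c = 0$ as formal power series, since multiplication of $c$ by the monomial $x^{d_1}y^{d_2}$ implements the translation $\tau^{\vec{d}}$. Because the $\lambda_{\vec{d}}$ are not all zero, $f$ is a nonzero Laurent polynomial, and by construction $f\in\Ann(c)$, completing the proof.

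The main obstacle I anticipate is making the linear-algebra step fully rigorous: one must argue that having at most $|D|$ distinct $D$-patterns genuinely forces the $|D|$ translates to be linearly dependent, rather than merely taking at most $|D|$ values individually. The clean way is to assemble the patterns into a matrix whose columns are indexed by the $D$-cells and whose rows are the realized pattern vectors, observe that this matrix has at most $|D|$ distinct rows, and deduce that its columns (equivalently, the restricted translates) span a space of dimension at most $|D|-1$ whenever the column count $|D|$ exceeds the rank. I would need to verify carefully that the linear relation found on the finite witnessing set $S$ propagates to all of $\Z^2$, which follows precisely because every position's pattern-tuple already appears in $S$.
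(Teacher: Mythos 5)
The paper does not reprove this lemma (it is quoted from \cite{kariszabados}), so I am judging your argument on its own terms. There is a genuine gap at the decisive step: you claim that having at most $|D|$ distinct pattern tuples forces the $|D|$ translates $\{\tau^{\vec{d}}(c)\}_{\vec{d}\in D}$ to be linearly dependent, asserting that ``the pattern matrix has rank strictly less than $|D|$.'' Nothing forces this. The matrix whose rows are the realized pattern vectors has at most $|D|$ distinct rows, hence rank at most $|D|$ --- but that equals the number of columns, so the columns may well be independent. A concrete counterexample to your key claim: take the constant configuration $c\equiv 1$ with $D=\{(0,0)\}$; then $|\Patt{c}{D}|=1=|D|$, yet the only relation $\lambda\,\tau^{\vec{0}}(c)=0$ forces $\lambda=0$, so no nonzero annihilator is supported on $D$. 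Similarly, for vertical stripes $c_{(i,j)}=1$ or $2$ according to the parity of $i$, with $D=\{(0,0),(1,0)\}$, the two pattern vectors $(1,2)$ and $(2,1)$ are linearly independent and the two translates admit no nontrivial relation. The lemma is still true in these cases ($x-1$, respectively $x^2-1$, annihilates $c$), but your argument cannot produce these annihilators because it only searches for annihilators supported on $D$ itself.

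The repair, which is the actual argument of \cite{kariszabados}, is a homogenization trick: augment each pattern vector with an extra coordinate equal to $1$, obtaining at most $|D|$ distinct vectors in $\Z^{|D|+1}$. These span a subspace of dimension at most $|D|<|D|+1$, so there is a nonzero integer vector $((\lambda_{\vec{d}})_{\vec{d}\in D},\mu)$ orthogonal to all of them, i.e.\ $\sum_{\vec{d}\in D}\lambda_{\vec{d}}\,c_{\vec{n}+\vec{d}}+\mu=0$ for every $\vec{n}\in\Z^2$. Not all $\lambda_{\vec{d}}$ can vanish (else $\mu=0$ as well), so $f=\sum_{\vec{d}}\lambda_{\vec{d}}x^{d_1}y^{d_2}$ is a nonzero Laurent polynomial with $fc$ equal to the constant configuration $-\mu$. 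Multiplying by $x-1$ kills the constant, so $(x-1)f$ is a nonzero element of $\Ann(c)$. Your final translation step (monomials implement translations, and a relation verified on a witnessing set of positions propagates to all of $\Z^2$ because every pattern tuple recurs there) is fine; it is only the existence of the linear relation that needs the extra coordinate.
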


One of the main results of~\cite{kariszabados} states that if a
configuration $c$ is annihilated by a non-zero polynomial then it has annihilators of particularly nice form:
\begin{theorem}[\cite{kariszabados}]
  \label{th:decompo}
  Let $c\in\Z[[x^{\pm 1}, y^{\pm 1}]]$ be a configuration (a finitary power series) annihilated by some non-zero polynomial.
  Then there exist pairwise linearly independent $(i_1, j_1), \ldots, (i_m, j_m)\in\Z^2$ such that
  \[ (x^{i_1}y^{j_1} - 1) \cdots (x^{i_m}y^{j_m} - 1) \in \Ann(c) .\]
\end{theorem}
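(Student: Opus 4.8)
The plan is to prove the statement in three steps. For a primitive vector $\vec v=(a,b)\in\Z^2$ write $x^{\vec v}:=x^ay^b$, and call $f\in\Z[x^{\pm1},y^{\pm1}]$ a \emph{line polynomial in direction $\vec v$} if $f=x^sy^t\,g(x^{\vec v})$ for a one-variable Laurent polynomial $g$ and a monomial $x^sy^t$. Each difference polynomial $x^{\vec v}-1$ is a line polynomial, and multiplication by a monomial is an invertible translation that preserves membership in $\Ann(c)$. The three steps are: (i) reduce annihilation by an arbitrary nonzero polynomial to annihilation by a \emph{product} of line polynomials; (ii) convert a product of line polynomials into a product of difference polynomials; and (iii) merge factors whose directions are parallel, so that the final directions are pairwise linearly independent.

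Steps (ii) and (iii) both run on a one-dimensional lemma: \emph{if a finitary series $c'$ is annihilated by a single line polynomial in direction $\vec v$, then $x^{P\vec v}-1\in\Ann(c')$ for some integer $P\ge1$.} Indeed, after clearing the monomial factor we may assume the annihilator is $g(x^{\vec v})$ with $g$ a one-variable polynomial having nonzero constant and leading coefficients. The relation $g(x^{\vec v})c'=0$ splits into independent recurrences along the lines $\{\vec p+t\vec v:t\in\Z\}$: on each line the restriction of $c'$ is a bi-infinite integer sequence, bounded since $c'$ is finitary, obeying the linear recurrence with characteristic polynomial $g$. This recurrence is deterministic in both directions and has a finite state set, so by pigeonhole each such sequence is purely periodic with period at most a bound $M$ depending only on $g$ and the number of colors. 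Taking $P=\mathrm{lcm}(1,\ldots,M)$ yields one period valid on every line, i.e.\ $x^{P\vec v}-1\in\Ann(c')$.

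Granting the lemma, step (ii) is an induction on the number $n$ of line-polynomial factors, the base case $n=1$ being the lemma itself. If $\phi_1\cdots\phi_n\in\Ann(c)$ with $\phi_\ell$ a line polynomial in direction $\vec v_\ell$, then $c'=\phi_2\cdots\phi_n\,c$ is again finitary and is annihilated by $\phi_1$, so the lemma gives $(x^{P_1\vec v_1}-1)\phi_2\cdots\phi_n\in\Ann(c)$. Applying the inductive hypothesis to the finitary series $(x^{P_1\vec v_1}-1)c$, which is annihilated by the $n-1$ line polynomials $\phi_2,\ldots,\phi_n$, and using that difference polynomials commute, produces a product of difference polynomials in $\Ann(c)$. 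For step (iii), if two of these factors have directions that are multiples of a common primitive $\vec u$, their product is itself a one-variable polynomial in $x^{\vec u}$, hence a single line polynomial in direction $\vec u$; the lemma then replaces them by one factor $x^{P\vec u}-1$. Iterating collapses all parallel families and leaves a product of difference polynomials with pairwise linearly independent directions.

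The main obstacle is step (i): that a nonzero annihilator forces $\Ann(c)$ to contain a product of line polynomials at all. This is not a matter of formal divisibility — a generic irreducible such as $x+y+1$ divides no product of line polynomials — so the argument must use that $c$ is finitary in an essential way. The route I would take is to reduce modulo primes: over $\mathbb{F}_p$ every configuration is automatically finitary, and since $f\in\Ann(c)$ implies $f^{p^k}=f(x^{p^k},y^{p^k})\in\Ann(c)$ by the Frobenius identity, one can analyze the $\mathbb{F}_p$-annihilator ideal using Frobenius together with a pigeonhole/periodicity argument to show it contains a product of line polynomials. The delicate part is the passage back to characteristic zero: one must reduce $c$ modulo several primes, extract a line-polynomial-product annihilator over each $\mathbb{F}_p$, and recombine this data — exploiting the integrality and boundedness of $c$ — into a genuine annihilator over $\Z$. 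I expect this lifting, rather than the finite-field case or steps (ii) and (iii), to be where the real difficulty lies.
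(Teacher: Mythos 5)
First, a point of reference: this paper does not prove Theorem~\ref{th:decompo} at all --- it is imported from \cite{kariszabados} --- so your attempt must be measured against the proof in that reference. Your outer scaffolding is sound and matches the organization there. The one-dimensional lemma is correct: along each line in direction $\vec{v}$ the restriction of a finitary series annihilated by $g(x^{\vec{v}})$ is a bounded integer sequence satisfying a linear recurrence whose extreme coefficients are nonzero, so the transition on the finite set of reachable integer $n$-tuples is a bijection and every orbit is purely periodic; and since multiplying $c$ by the remaining Laurent-polynomial factors preserves finitariness, your induction in step (ii) and the merging of parallel directions in step (iii) both go through. Your observation that step (i) cannot be a formal divisibility statement (the $x+y+1$ example) is also exactly right.

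The genuine gap is where you put it: step (i) is the heart of the theorem and you have not proved it. Two comments. First, your plan for the Frobenius step is more roundabout than necessary: there is no need to reduce $c$ modulo several primes and then lift. From $fc=0$ one gets $f(x^p,y^p)\,c\equiv f^p c=0\pmod p$ directly over $\Z$, and the coefficients of $f(x^p,y^p)\,c$ are bounded by a constant depending only on $f$ and the alphabet, independently of $p$; hence $f(x^p,y^p)\in\Ann(c)$ outright for every sufficiently large prime $p$, with no recombination step at all. Second --- and this is the missing content --- even with this infinite supply of annihilators in hand, one must still show that $\Ann(c)$ contains a product of line polynomials. In \cite{kariszabados} this is done by comparing $f(x^p,y^p)$ and $f(x^q,y^q)$ for distinct large primes: either a resultant with respect to one variable is a nonzero element of $\Ann(c)$ depending on the other variable only (hence already a line polynomial), or the two polynomials share factors, and an analysis of these common factors --- equivalently, of the variety of $\Ann(c)$ in the torus, whose positive-dimensional components are forced to be translated subtori, which correspond exactly to line-polynomial factors --- yields the desired product. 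None of this argument appears in your proposal; you only gesture at ``a pigeonhole/periodicity argument'' over $\mathbb{F}_p$ followed by a lifting whose difficulty you acknowledge but do not resolve. So the proposal establishes the comparatively easy reductions (ii) and (iii) but not the theorem itself.
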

\noindent
Note that both Lemma~\ref{th:low_complexity} and Theorem~\ref{th:decompo} were proved in~\cite{kariszabados} for configurations $c\in A^{\Z^d}$
in arbitrary dimension $d$. In this work we only deal with two-dimensional configurations, so above we stated these results for $d=2$.

If $X\subseteq  A^{\Z^2}$ is a set of configurations, we denote by $\Ann(X)$ the set of Laurent polynomials that annihilate all elements of $X$.
We call $\Ann(X)$ the annihilator ideal of $X$.

\subsection{Dynamical systems concepts}
\label{sec:dynamical}

Cylinders $[p]$ are a base of a compact topology on $A^{\Z^2}$, namely the product of discrete topologies on $A$. See, for example,
the first few pages of~\cite{tullio}.
The topology is equivalently defined by a metric on $A^{\Z^2}$ where two configurations are close to each other if they agree
with each other on a large region around cell $\vec{0}$.

A subset $X$ of $A^{\Z^2}$ is a \emph{subshift} if it is closed in the topology
and closed under translations. Equivalently, every configuration $c$ that is not in $X$ contains a finite pattern $p$
that prevents it from being in $X$: no configuration that contains $p$ is in $X$.
We can then define subshifts using forbidden patterns as well:
for a set $F$ of finite patterns, define
\[ X_F=\{c\in A^{\Z^2}\ \mid  \Lang{c}\cap F=\emptyset\}, \]
the set of configurations that avoid all patterns in $F$. A set $X_F$ is a subshift, and every subshift is $X_F$ for some $F$.
If $X=X_F$ for some finite $F$ then $X$ is a \emph{subshift of finite type} (SFT).
For a subshift $X\subseteq A^{\Z^2}$ we denote by $\Patt{X}{D}=\cup_{c\in X} \Patt{c}{D}$ and $\Lang{X}=\cup_{c\in X} \Lang{c}$
the sets of $D$-patterns and all finite patterns that appear in
elements of $X$, respectively. Set $\Lang{X}$ is called the \emph{language} of the subshift.

Subshifts of finite type can be defined in terms of \emph{allowed patterns} as well. To do so we fix a finite domain $D\subseteq\Z^2$, and
take a set $P\subseteq A^D$ of allowed patterns with domain $D$. Forbidding all other $D$-patterns yields the SFT
\[
\SFT{P} = X_{A^D\setminus P}=\{c\in A^{\Z^2}\ \mid \Patt{c}{D}\subseteq P \},
\]
the set of configurations whose $D$-patterns are among $P$. We call elements of $\SFT{P}$ \emph{valid configurations} for $P$.

We call an SFT \emph{aperiodic} if it is non-empty but does not contain any periodic configurations.

The \emph{tiling problem} (aka the domino problem) is the decision problem that
asks whether a given SFT is empty, that is, whether there exists a configuration avoiding a given finite collection $P$ of forbidden finite patterns. Usually this question is asked in terms of so-called Wang tiles, but our formulation is equivalent.
The tiling problem is undecidable~\cite{berger}. 
An SFT is called \emph{aperiodic} if it is non-empty but does not contain any periodic configurations. It is significant that aperiodic SFTs exist~\cite{berger}, and in fact they must exist because of the undecidability of the tiling
problem~\cite{wang}. We recall the reason for this fact in the proof of Corollary~\ref{cor:cordecidable}.
It is also worth noting that
a two-dimensional SFT that contains a periodic configuration must also contain
a two-periodic configuration~\cite{robinson1971undecidability}.

Convergence of a sequence $c^{(1)}, c^{(2)}, \ldots$ of configurations to a configuration $c$
in our topology has the following simple meaning: For every cell $\vec{n}\in\Z^2$ we must have $c^{(i)}_{\vec{n}} = c_{\vec{n}}$ for all sufficiently large $i$. As usual, we denote then $c=\lim_{i\rightarrow\infty} c^{(i)}$. Note that if all $c^{(i)}$ are in a subshift $X$, so is the limit.
Compactness of space $A^{\Z^2}$ means that every sequence has a converging subsequence. In the proof of Theorem~\ref{thm:main}
in Section~\ref{sec:main} we frequently use this fact and extract converging subsequences from sequences of configurations.

The \emph{orbit} of configuration $c$ is the set ${\cal  O}(c) = \{\tau^{\vec{t}}(c)\ |\ \vec{t}\in\Z^2 \}$ that contains all translates of $c$.
The \emph{orbit closure} $\overline{{\cal  O}(c)}$ of $c$ is the topological closure of the orbit ${\cal  O}(c)$. It is a subshift, and in fact it is the intersection of all subshifts that contain $c$. The orbit closure $\overline{{\cal  O}(c)}$ can hence be called the subshift
generated by $c$.
In terms of finite patterns,
$c'\in \overline{{\cal  O}(c)}$ if and only if every finite pattern that appears in $c'$ appears also in $c$.
$\overline{{\cal  O}(c)}$ can be seen as the subshift containing all the translates of $c$ (its orbit) and all the limits of those translates.
Thus it can be different of ${\cal  O}(c)$: if $c$ is the configuration that with a black cell at the origin and white everywhere else, all the configurations of its orbit will contain a black cell, but at different positions; however its orbit closure contains the configuration with only white cells, as it is a limit of translations of $c$.

A configuration $c$ is called \emph{uniformly recurrent} if for every $c'\in \overline{{\cal  O}(c)}$ we have
$\overline{{\cal  O}(c')} = \overline{{\cal  O}(c)}$. This is equivalent to  $\overline{{\cal  O}(c)}$ being
a \emph{minimal subshift} in the sense that it has no proper non-empty subshifts inside it. A classical result by Birkhoff~\cite{birkhoff}
implies that every non-empty subshift contains a minimal subshift, so there is a uniformly recurrent configuration
in every non-empty subshift.

We use the notation $\inner{\vec{x}}{\vec{y}}$ for the inner product of vectors $\vec{x},\vec{y}\in \Z^2$.
For a nonzero vector $\vec{u}\in\Z^2\setminus\{\vec{0}\}$  we denote\[
H_{\vec{u}} = \{\vec{x}\in\Z^2\ |\ \inner{\vec{x}}{\vec{u}} < 0 \}
\]
for the discrete \emph{half plane} in direction $\vec{u}$. See Figure~\ref{fig:halfplanes}(a) for an illustration.
A subshift $X$ is \emph{deterministic} in direction $\vec{u}$ if for all $c,c'\in X$
\[
c|_{H_{\vec{u}}}=c'|_{H_{\vec{u}}} \Longrightarrow c=c',
\]
that is, if the contents of a configuration in the half plane $H_{\vec{u}}$ uniquely determines the contents in the rest of the cells.
Note that it is enough to verify that the value $c_{\vec{0}}$ on the boundary of the half plane is uniquely determined.
Indeed, if $c|_{H_{\vec{u}}}$ uniquely determines the line at its boundary, it is also true for all the translations of $c$, so the next line is also uniquely determined. By repeating this process the whole configuration is determined by $c|_{H_{\vec{u}}}$.
Moreover, by compactness, determinism in direction $\vec{u}$ implies that
there is a finite number $k$ such that already the contents  of a configuration in the discrete box
\[
B_{\vec{u}}^{k} = \{\vec{x}\in\Z^2\ |\ -k <  \inner{\vec{x}}{\vec{u}} < 0 \mbox{ and } -k <  \inner{\vec{x}}{\vec{u}^\bot} <  k \}
\]
are enough to uniquely determine the contents in cell ${\vec{0}}$,
where we denote by $\vec{u}^{\bot}$ a vector that is orthogonal to $\vec{u}$ and
has the same length as $\vec{u}$, e.g., $(n,m)^{\bot}=(m,-n)$. See Figure~\ref{fig:halfplanes}(b) for an illustration.

\begin{figure}[ht]%
  \centering
  \subcaptionbox{The discrete half plane $H_{\vec{u}}$ \label{fig:halfplanes:a}}{%
    \includegraphics[width=5cm]{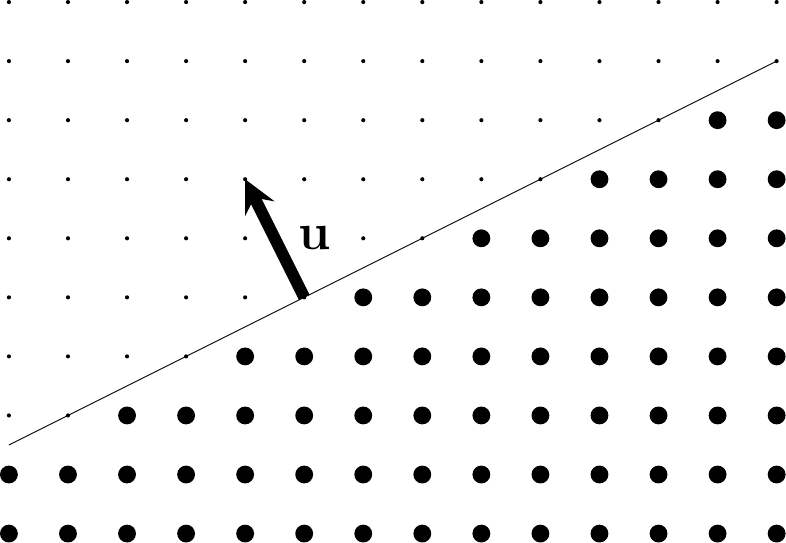}
  }%
  \qquad
  \qquad
  \subcaptionbox{The discrete box $B_{\vec{u}}^k$ with $k=10$. \label{fig:halfplanes:b}}{%
    \includegraphics[width=5cm]{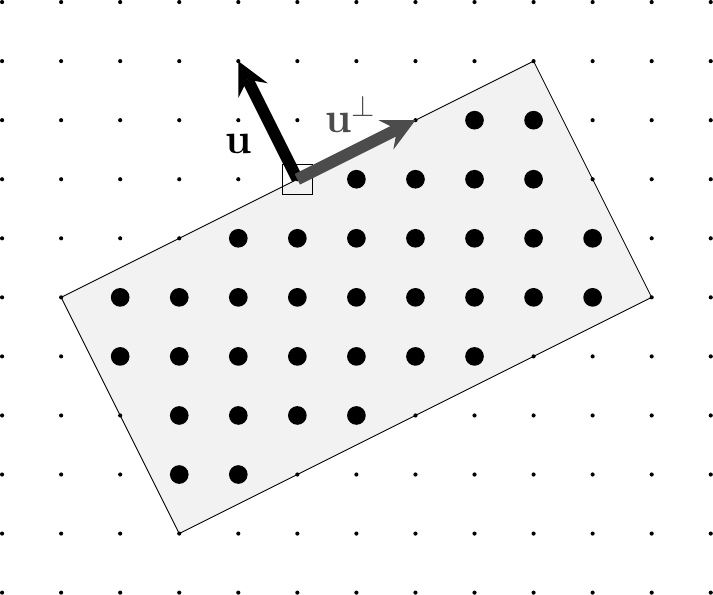}
  }%
  \caption{Discrete regions determined by vector $\vec{u}=(-1,2)$.}
  \label{fig:halfplanes}
\end{figure}

If $X$ is deterministic in directions $\vec{u}$ and  $-\vec{u}$ we say that $\vec{u}$ is a direction of \emph{two-sided} determinism.
If  $X$ is deterministic in direction $\vec{u}$ but not in direction $-\vec{u}$ we say that $\vec{u}$ is a direction of \emph{one-sided} determinism. Directions of two-sided determinism correspond to directions of expansivity in the symbolic dynamics literature. If $X$
is not deterministic in direction $\vec{u}$ we call $\vec{u}$ a \emph{direction of non-determinism}.
Finally, note that the concept of determinism in direction $\vec{u}$ only depends on the orientation of vector $\vec{u}$
and not on its magnitude.

\subsection{Wang tiles}

Two-dimensional SFTs are commonly studied in terms of \emph{Wang tiles}, and the first aperiodic SFTs were constructed and the undecidability of the domino problem
was originality proved in the Wang tile formalism. A Wang tile is a unit square tile with colored edges, represented as a
4-tuple $$a=(a_\uparrow, a_\rightarrow, a_\downarrow, a_\leftarrow)\in C^4$$ of colors of the north, the east, the south and the west edges of the tile, respectively, where $C$ is a
set of colors. (See Figure~\ref{fig:wang}.)
\begin{figure}[htb]
\begin{center}
\includegraphics[scale=0.38]{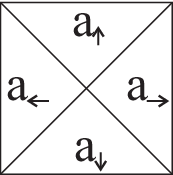}
\end{center}
\caption{A Wang tile $a=(a_\uparrow, a_\rightarrow, a_\downarrow, a_\leftarrow)$.}
\label{fig:wang}
\end{figure}
A Wang tile set $T$ is a finite set of Wang tiles. A Wang tile set $T$ defines a subshift of $T^{\Z^2}$, where
forbidden patterns are all the dominoes of two tiles that do not have the same color on their
abutting edges. We say that a configuration $c\in T^{\Z^2}$ is correctly tiled at position $(i,j)\in \Z^2$
if $c(i,j)$ matches with its four neighbors on the abutting edges so that
$$
\begin{array}{lcll}
c(i,j)_\uparrow &=& c(i,j+1)_\downarrow, &  \\
c(i,j)_\downarrow &=& c(i,j-1)_\uparrow, & \\
c(i,j)_\rightarrow &=& c(i+1,j)_\leftarrow  & \mbox{ and } \\
c(i,j)_\leftarrow &=& c(i-1,j)_\rightarrow.
\end{array}
$$
Otherwise there is a tiling error at position $(i,j)$.
We let
$$\Valid{T} = \{c\in T^{\Z^2}\ |\ \mbox{ $c$ is correctly tiled at every position $\vec{u}\in\Z^2$ } \}$$
be the set of valid tilings by tile set  $T$.
Clearly $\Valid{T}$ is an SFT, and in fact any given set $P\subseteq A^D$ of allowed patterns  can be effectively converted into an equivalent
Wang tile set $T$ so that $\Valid{T}$ and $\SFT{P}$ are conjugate, i.e., homeomorphic under a translation invariant homeomorphism.
In this sense Wang tiles capture the entire complexity of two-dimensional subshifts of finite type.
Note that we use the same notation $\Valid{T}$ and $\SFT{P}$ for the sets of valid tilings by a Wang tile set $T$ and of valid configurations under allowed patterns $P$,
respectively. This should not cause any confusion since it is always clear from the context whether we are talking about Wang tiles or allowed patterns.

The \emph{cartesian product} $T_1\times T_2\subseteq (C_1\times C_2)^4$ of Wang tile sets $T_1\subseteq C_1^4$ and $T_2\subseteq C_2^4$ is the Wang tile set
that contains for all $(a_\uparrow, a_\rightarrow, a_\downarrow, a_\leftarrow)\in T_1$ and $(b_\uparrow, b_\rightarrow, b_\downarrow, b_\leftarrow)\in T_2$
the  tile $( (a_\uparrow,b_\uparrow), (a_\rightarrow,b_\rightarrow), (a_\downarrow,b_\downarrow), (a_\leftarrow,b_\leftarrow))$. The ``sandwich'' tiles in $T_1\times T_2$
have hence two layers that tile the plane independently according to $T_1$ and $T_2$, respectively.

The results reported below are based on Berger's theorem, stating in the Wang tile formalism
the existence of aperiodic SFTs and the undecidability of the domino problem.

\begin{theorem}[R. Berger~\cite{berger}]
\begin{enumerate}
\item[(a)]
There exists a Wang tile set $T$ that is aperiodic, that is, such that $\Valid{T}$ is non-empty but does not contain any periodic configurations.
\item[(b)]
It is undecidable to determine for a given Wang tile set $T$ whether $\Valid{T}$ is empty or not.
\end{enumerate}
\end{theorem}


\section{Our results}
In this section, we sum up our main results, and the proofs will be given in later sections

\begin{theorem}
\label{thm:main}
Let $c$ be a two-dimensional configuration that has a non-trivial annihilator. Then $\overline{{\cal  O}(c)}$ contains a configuration
$c'$ such that $\overline{{\cal  O}(c')}$ has no direction of one-sided determinism.
\end{theorem}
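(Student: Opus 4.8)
The plan is to exploit the factored annihilator from \cref{th:decompo} to reduce the statement to controlling only finitely many exceptional directions, and then to kill one-sided determinism by descending to a minimal subsystem.

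First I would record the input. Since $c$ has a non-trivial annihilator, \cref{th:decompo} gives pairwise linearly independent vectors $(i_1,j_1),\dots,(i_m,j_m)$ with $f:=\prod_{k=1}^{m}(x^{i_k}y^{j_k}-1)\in\Ann(c)$. Because $\Ann(\cdot)$ is closed under translation and under limits, $f$ annihilates every configuration of $\overline{\mathcal{O}(c)}$, hence also every configuration of $\overline{\mathcal{O}(c')}$ for any $c'\in\overline{\mathcal{O}(c)}$. Thus the \emph{same} polynomial $f$ governs all subsystems we will consider, so the exceptional directions identified below are common to all of them.

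Next I would extract a determinism dichotomy from $f$. Writing the relation $fc=0$ cell by cell, if the support of $f$ has a unique vertex minimizing $\inner{\cdot}{\vec{u}}$, then the value at $\vec{0}$ is forced to be a fixed combination of values at cells of $H_{\vec{u}}$, so every subshift annihilated by $f$ is deterministic in direction $\vec{u}$. The Newton polygon of $f$ is the Minkowski sum of the segments $[\vec{0},(i_k,j_k)]$, a zonotope whose edges are parallel to the $(i_k,j_k)$. Hence a unique extremal vertex exists in both $+\vec{u}$ and $-\vec{u}$ — giving two-sided determinism — for every $\vec{u}$ that is not orthogonal to any $(i_k,j_k)$. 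Consequently the only candidates for one-sided (or non-) determinism are the at most $m$ directions orthogonal to the factor vectors, and by pairwise independence each such special direction $\vec{u}$ is orthogonal to exactly one factor $e:=x^{i_k}y^{j_k}-1$, whose exponent vector then lies along the boundary of $H_{\vec{u}}$.

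Finally, to remove one-sided determinism I would pass to a uniformly recurrent $c'\in\overline{\mathcal{O}(c)}$ (Birkhoff), so that $Y:=\overline{\mathcal{O}(c')}$ is minimal, and argue that no special direction is one-sided for $Y$. Suppose $\vec{u}$ were deterministic but not co-deterministic: non-determinism in $-\vec{u}$ yields $a,b\in Y$ agreeing on the open upper half plane $\{\inner{\vec{x}}{\vec{u}}>0\}$ but differing at $\vec{0}$, while determinism in $\vec{u}$ forces the difference $\delta:=a-b$ to be supported in $\{\inner{\vec{x}}{\vec{u}}\le 0\}$ and unbounded below. Factoring $f=e\cdot g$ with $g$ having a unique $\vec{u}$-extremal vertex and comparing the top nonzero $\vec{u}$-layer of the product $f\delta=0$, one finds that this top layer of $\delta$ is annihilated by $e$ alone, hence is nonzero and periodic along the boundary line of $H_{\vec{u}}$; so $a$ and $b$ differ in a pattern periodic along the boundary yet confined beneath it. The hard part — and the real crux of the theorem — will be turning this rigidity into a contradiction with minimality: uniform recurrence of $a$ should reproduce this boundary-periodic discrepancy with bounded gaps throughout $Y$, which ought to be incompatible with $a$ and $b$ agreeing on an entire half plane (the naive pair-translation limits merge or separate the two configurations without keeping both the agreement and the discrepancy visible, so a more careful extraction is needed). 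Once the single-direction contradiction is secured, the remaining bookkeeping — running over the finitely many special directions and checking that descending to $Y$ does not silently create a new one-sided direction — is routine, since two-sided determinism is inherited by subsystems and the special directions are fixed by $f$.
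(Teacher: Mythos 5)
Your reduction to the finitely many special directions (those orthogonal to some factor vector $\vec{v}_k$) is correct and matches Proposition~\ref{prop:detann} of the paper, and passing to a uniformly recurrent configuration is indeed how the paper closes the argument. But the step you flag as ``the hard part'' is not deferred bookkeeping: it is the entire content of the theorem, and your sketch stops exactly where the paper's new idea begins. The rigidity you extract (the top layer of $\delta=a-b$ is periodic along the boundary and confined below it) is essentially a restatement of one-sided non-determinism; minimality alone does not convert it into a contradiction, as you yourself observe when you note that naive translation limits lose either the agreement or the discrepancy.

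The missing engine is a counting argument on the fibers of the map $c \mapsto \phi c$, where $\phi=x^{n_k}y^{m_k}-1$ is the factor whose vector is orthogonal to the one-sided direction $\vec{u}$. Using determinism in direction $\vec{u}$ (witnessed by a finite box $B$) together with the fact that a difference of two configurations with the same $\phi$-image is periodic in direction $\vec{v}_k$, the paper shows that two such configurations agreeing on $B$ agree on a whole half plane (Lemma~\ref{lem:apulemma}), hence at most $|A|^{|B|}$ configurations of $X$ can share a $\phi$-image (Corollary~\ref{cor:bounded}). One then takes a family $c_1,\dots,c_n$ with a common $\phi$-image and $n$ \emph{maximal}, translates the whole family by $n_j\vec{u}$ and extracts a joint limit $d_1,\dots,d_n$; these limits still share a $\phi$-image and are now pairwise distinct on \emph{every} translate of $B$ (Lemma~\ref{lem:dlemma}). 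If $\overline{\mathcal{O}(d_1)}$ failed to be deterministic in direction $-\vec{u}$, a witness pair $x,y$ together with limits of the translated family would yield $n+1$ pairwise distinct configurations with a common $\phi$-image (Lemma~\ref{lem:onesidedremoved}), contradicting the maximality of $n$. This ``saturate a maximal fiber, then push it to infinity along $\vec{u}$'' construction is precisely the careful extraction you say is needed; without it, or a substitute for it, your proposal does not prove the theorem.
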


\noindent
From this result, using a technique by Cyr and Kra~\cite{cyrkra}, we then obtain the second main result, stating that under the hypotheses of Nivat's conjecture, a configuration contains arbitrarily large periodic regions.

\begin{theorem}
\label{thm:periodic}
Let $c$ be a two-dimensional configuration that has low complexity with respect to a rectangle. Then $\overline{{\cal  O}(c)}$ contains a periodic configuration.
\end{theorem}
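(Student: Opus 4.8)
The plan is to combine the three results already established with an analysis of the determinism directions of the subshift. Since $c$ has low complexity with respect to a rectangle, Lemma~\ref{th:low_complexity} tells us that $\Ann(c)$ contains a non-zero polynomial, so Theorem~\ref{thm:main} applies: the orbit closure $\overline{{\cal O}(c)}$ contains a configuration $c'$ such that $X:=\overline{{\cal O}(c')}$ has no direction of one-sided determinism. Because $c'\in\overline{{\cal O}(c)}$ we have $X\subseteq\overline{{\cal O}(c)}$, so it suffices to exhibit a periodic configuration inside $X$; such a configuration automatically lies in $\overline{{\cal O}(c)}$. Moreover $\Ann(c)\subseteq\Ann(c')$, since annihilation by a fixed $f$ is preserved by translation and by taking limits; the same remark shows that every configuration of $X$ is annihilated by every polynomial in $\Ann(c')$. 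In particular $c'$ still has a non-trivial annihilator.

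The next step is to restrict the possible directions of non-determinism of $X$ using this algebraic structure. Applying Theorem~\ref{th:decompo} to $c'$ yields a polynomial $g=(x^{i_1}y^{j_1}-1)\cdots(x^{i_m}y^{j_m}-1)\in\Ann(c')$ with pairwise linearly independent exponent vectors, and by the remark above $g$ annihilates every configuration in $X$. The Newton polygon of $g$ is a zonotope whose edges are parallel to the vectors $(i_k,j_k)$. For any direction $\vec{u}$ for which the support of $g$ has a \emph{unique} vertex $(a,b)$ minimizing $\inner{\cdot}{\vec{u}}$, evaluating $g c''=0$ at the cell $(a,b)$ expresses the value $c''_{\vec{0}}$ in terms of values at cells lying in $H_{\vec{u}}$; since that extreme coefficient of $g$ is $\pm1$, this holds for all $c''\in X$ and shows that $X$ is deterministic in direction $\vec{u}$. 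Uniqueness of the extreme vertex can fail only for the finitely many directions $\vec{u}$ normal to an edge of the zonotope, i.e.\ $\vec{u}$ parallel to some $(j_k,-i_k)$. Hence $X$ is deterministic in all but finitely many directions.

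Now I would combine this with the property supplied by Theorem~\ref{thm:main}. Since $X$ has no one-sided determinism, for each line either both orientations are deterministic (two-sided determinism, i.e.\ expansivity) or neither is (two-sided non-determinism, i.e.\ non-expansivity). By the previous paragraph every direction outside a fixed finite set is two-sided deterministic, so $X$ has only finitely many non-expansive directions, and each of them is two-sided non-deterministic.

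The final and hardest step is to convert this structural information into an actual periodic configuration, and this is where the technique of Cyr and Kra~\cite{cyrkra} enters. If $X$ had no non-expansive direction at all, a theorem of Boyle and Lind would force $X$ to be finite, whence every configuration in $X$ is periodic. The genuine difficulty is the remaining case of finitely many non-expansive directions, all two-sided non-deterministic: here I would feed the low-complexity bound into Cyr and Kra's analysis of non-expansive subdynamics, which controls how patterns extend across a non-expansive direction, to produce a periodic configuration in $X$ (equivalently, to prevent a minimal subsystem of $X$ from being infinite). I expect this transfer from ``finitely many two-sided non-deterministic non-expansive directions plus low complexity'' to ``periodic point'' to be the main obstacle, as it is precisely the place where the combinatorial complexity hypothesis, rather than the purely algebraic annihilator, must be brought to bear.
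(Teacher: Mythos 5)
Your overall skeleton matches the paper's: invoke Lemma~\ref{th:low_complexity} to get a non-trivial annihilator, apply Theorem~\ref{thm:main} to land in a subshift $X$ with no one-sided determinism, and then split into the case where every direction is deterministic (handled by Boyle--Lind, as in Proposition~\ref{prop:case1}) and the case where some direction is two-sided non-deterministic. The Newton-polygon digression is correct but unnecessary: Proposition~\ref{prop:detann} already gives determinism outside finitely many directions, and in any case the argument only needs the dichotomy ``all directions deterministic, or some $\vec{u}$ with both $\vec{u}$ and $-\vec{u}$ non-deterministic,'' which is immediate from the absence of one-sided determinism.

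The genuine gap is that you stop exactly at the step that carries the content of the theorem. You write that you ``would feed the low-complexity bound into Cyr and Kra's analysis'' and that you ``expect this transfer \ldots to be the main obstacle,'' but you do not carry it out. The paper does this via Proposition~\ref{prop:case2}, whose proof rests on two concrete lemmas from~\cite{szabados}: Lemma~\ref{lem:Michal2} produces a $\vec{u}$-balanced or $(-\vec{u})$-balanced convex set $D$ from the low-complexity hypothesis, and Lemma~\ref{lem:Michal4} converts a non-determinism witness (two configurations agreeing on the interior of a stripe containing a translate of $D$ but disagreeing on the stripe) into a period perpendicular to $\vec{u}$. Your sketch names neither ingredient nor how the non-determinism in both directions is used (it is needed so that whichever of $\vec{u}$, $-\vec{u}$ the balanced set is adapted to, a witness of non-determinism is available). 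A second, related omission: you never reduce to a uniformly recurrent configuration. Proposition~\ref{prop:case2} is stated for uniformly recurrent $c$, and its proof uses uniform recurrence twice --- to transfer $\vec{u}$-balancedness from $c$ to the configuration $d$ witnessing non-determinism, and to guarantee $e\in\overline{{\cal O}(d)}$ so that Lemma~\ref{lem:Michal4} applies. Without first passing to a uniformly recurrent element of $\overline{{\cal O}(c)}$ (available by Birkhoff's theorem), the machinery you are gesturing at does not engage.
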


\noindent
These two theorems are proved in Sections~\ref{sec:main} and \ref{sec:periodic}, respectively. But let us first demonstrate
how these results imply relevant corollaries. First we consider SFTs defined in terms of allowed rectangular patterns.
Let $D=\llbracket n\rrbracket\times \llbracket m\rrbracket$ for some $m,n\in\N$.

\begin{corollary}
\label{cor:corperiodic}
Let $P\subseteq A^D$ be a set of $D$-patterns over alphabet $A$.
If $|P|\leq nm$ and $\SFT{P}\neq\emptyset$ then $\SFT{P}$ contains a periodic configuration.
\end{corollary}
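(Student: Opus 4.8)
The plan is to reduce the statement entirely to Theorem~\ref{thm:periodic}, since all the genuine work has already been done there. First I would invoke the hypothesis $X\neq\emptyset$ to fix an arbitrary configuration $c\in X$. By the definition of $X$, every $D$-pattern appearing in $c$ lies in $P$, so $\Patt{c}{D}\subseteq P$ and therefore $|\Patt{c}{D}|\leq |P|\leq nm = |D|$. This is exactly the statement that $c$ has low complexity with respect to the rectangle $D=\{1,\ldots,n\}\times\{1,\ldots,m\}$, so the hypothesis of Theorem~\ref{thm:periodic} is met.

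Next I would apply Theorem~\ref{thm:periodic} to this $c$, which yields a periodic configuration $c'\in\overline{{\cal O}(c)}$. It then remains only to check that this periodic configuration actually lies in $X$. For this I would use that $X=X_{A^D\setminus P}$ is a subshift, hence closed in the topology and closed under translations. Since $c\in X$, the whole orbit ${\cal O}(c)$ is contained in $X$, and by closedness so is its closure; thus $\overline{{\cal O}(c)}\subseteq X$. In particular $c'\in X$, which gives a periodic configuration in $X$ as required.

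I do not expect any substantial obstacle here: the corollary is a direct corollary in the literal sense, and the only points to verify are the elementary counting inequality $|\Patt{c}{D}|\leq|P|\leq |D|$ and the topological fact that an orbit closure of a point of a subshift stays inside that subshift. All the difficulty — producing a periodic configuration in the orbit closure of a low complexity configuration — is packaged inside Theorem~\ref{thm:periodic}, whose proof is deferred to Section~\ref{sec:periodic} and ultimately rests on Theorem~\ref{thm:main} together with the Cyr--Kra technique. If anything were delicate it would be confirming that the subshift $X$ here is genuinely a subshift (equivalently that $A^D\setminus P$ is a valid set of forbidden patterns), but this is immediate from the preliminaries, so the argument is essentially a three-line assembly of already-established facts.
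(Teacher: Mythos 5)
Your proposal is correct and follows exactly the same route as the paper: pick an arbitrary $c\in X$, note that $|\Patt{c}{D}|\leq|P|\leq nm$ gives low complexity, apply Theorem~\ref{thm:periodic}, and use that $\overline{{\cal O}(c)}\subseteq X$ because $X$ is a subshift. The paper's proof is just a terser version of the same assembly.
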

\begin{proof}
Let $c\in \SFT{P}$ be arbitrary. By Theorem~\ref{thm:periodic} then, $\overline{{\cal  O}(c)}\subseteq \SFT{P}$ contains a periodic configuration.
\end{proof}

\begin{corollary}
\label{cor:cordecidable}
There is an algorithm that, given as input a set of $D$-patterns over a finite alphabet, with $|P|\leq nm$, determines whether $\SFT{P}\neq \emptyset$.
\end{corollary}
\begin{proof}
This is a classical argumentation by H.~Wang~\cite{wang}: there is a semi-algorithm to test if a given SFT is empty, and there is
a semi-algorithm to test if a given SFT contains a periodic configuration. 
Let us denote $P\subseteq A^D$ the set of $D$-patterns given as input.
Since $\SFT{P}$ is an SFT, we can execute both of these semi-algorithms on $\SFT{P}$. By Corollary~\ref{cor:corperiodic}, if $\SFT{P}\neq\emptyset$ then $\SFT{P}$ contains a periodic configuration. Hence, exactly one of these two semi-algorithms will return a positive answer.
\end{proof}

\noindent
The next corollary solves Nivat's conjecture for uniformly recurrent configurations.
\begin{corollary}
\label{cor:corminimal}
A uniformly recurrent configuration $c$ that has low complexity with respect to a rectangle is periodic.
\end{corollary}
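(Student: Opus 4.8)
The plan is to combine Theorem~\ref{thm:periodic} with the fact that uniform recurrence of $c$ is the same as minimality of $\overline{{\cal O}(c)}$. First, since $c$ has low complexity with respect to a rectangle, Theorem~\ref{thm:periodic} provides a periodic configuration $p\in\overline{{\cal O}(c)}$. Fix a non-zero vector of periodicity $\vec{t}\in\Z^2\setminus\{\vec{0}\}$ of $p$, so that $\tau^{\vec{t}}(p)=p$.

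Next I would observe that the set $P_{\vec{t}}=\{q\in A^{\Z^2}\mid \tau^{\vec{t}}(q)=q\}$ of all configurations admitting $\vec{t}$ as a vector of periodicity is closed and translation invariant. It is closed because it is the set where the two continuous maps $\tau^{\vec{t}}$ and the identity agree (equivalently, if $q^{(i)}\to q$ with each $q^{(i)}\in P_{\vec{t}}$, then the values stabilise cell by cell and the identity $q_{\vec{n}}=q_{\vec{n}-\vec{t}}$ passes to the limit). It is translation invariant because translations commute: $\tau^{\vec{s}}(p)$ again has period $\vec{t}$ for every $\vec{s}$. Hence $\mathcal{O}(p)\subseteq P_{\vec{t}}$, and since $P_{\vec{t}}$ is closed, $\overline{{\cal O}(p)}\subseteq P_{\vec{t}}$ as well; that is, every configuration in $\overline{{\cal O}(p)}$ shares the single period vector $\vec{t}$.

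Finally I would invoke uniform recurrence. By definition, $c$ being uniformly recurrent is equivalent to $\overline{{\cal O}(c)}$ being a minimal subshift. Since $p\in\overline{{\cal O}(c)}$, minimality forces $\overline{{\cal O}(p)}=\overline{{\cal O}(c)}$. In particular $c\in\overline{{\cal O}(c)}=\overline{{\cal O}(p)}\subseteq P_{\vec{t}}$, so $c$ admits $\vec{t}$ as a vector of periodicity and is therefore periodic.

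All the genuine difficulty is absorbed into Theorem~\ref{thm:periodic}; granting that, the only delicate point is the passage from ``the orbit closure contains some periodic configuration'' to ``every configuration of the orbit closure has a common period vector.'' This is exactly the step that makes minimality usable, and I expect it to be the main thing to get right: it relies not merely on periodic points being present, but on the fact that a fixed period vector carves out a closed, shift-invariant subset, so that minimality can propagate the single vector $\vec{t}$ to all of $\overline{{\cal O}(c)}$, and in particular back to $c$ itself.
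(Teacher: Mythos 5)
Your proof is correct and follows essentially the same route as the paper: apply Theorem~\ref{thm:periodic} to get a periodic configuration in $\overline{{\cal O}(c)}$, observe that every element of its orbit closure inherits the periodicity, and use uniform recurrence to identify that orbit closure with $\overline{{\cal O}(c)}$. The only difference is that you spell out (with the closed, shift-invariant set $P_{\vec{t}}$) the step the paper states in one line, namely that all elements of the orbit closure of a periodic configuration are periodic.
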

\begin{proof}
Because $c$ has low complexity with respect to a rectangle then by Theorem~\ref{thm:periodic} there is a periodic
configuration $c'\in\overline{{\cal  O}(c)}$.
Because $\overline{{\cal O}(c')}$ contains only translates and limits of translates of $c'$,
all configurations in $\overline{{\cal  O}(c')}$ are periodic.
Finally,  because $c$ is uniformly recurrent we have $\overline{{\cal  O}(c)}=\overline{{\cal  O}(c')}$, which implies that
all elements of $\overline{{\cal  O}(c)}$, including $c$ itself, are periodic.
\end{proof}

\noindent
In Section~\ref{sec:conclusions} we briefly argue that all of these results remain true if the $n\times m$ rectangle is replaced by any convex discrete shape.

\bigskip

Our third main result shows that we are able to encode any set of Wang tiles into a pretty low complexity SFT.

\begin{theorem}
\label{thm:main_recoding}
Let $T$ be a given Wang tile set. One can effectively find positive integers $N$ and $k$ such that for the given $T$ and for any given $n\geq N$ and $m\geq 2$ one can effectively construct a set $P$ of binary rectangular patterns of size $n\times m$ such that
the cardinality of $P$ is at most $nm+k(n+m)$ and  $\SFT{P}$ contains a (periodic) tiling if and only if $\Valid{T}$ contains a
(periodic, resp.) configuration.
\end{theorem}

As a consequence, we are able to prove bounds on the complexity of SFTs for which the domino problem is undecidable.

\begin{corollary}
\label{cor:goles_1}
Let $f:\N\longrightarrow\N$ be a computable function, $f\not\in \mathcal{O}(1)$.
The following problem is undecidable for any fixed $m\geq 2$:
Given $n$ and a set $P$ of at most $nm+f(n)n$ binary rectangular patterns
of size $n\times m$, is $\SFT{P}$ empty~?
\end{corollary}

\begin{proof}
We many-one reduce the domino problem. Let $T$ be any given set of Wang tiles. Compute constants $N$ and $k$ of
Theorem~\ref{thm:main_recoding}. For $n=N, N+1, N+2,\dots$ compute $f(n)$ until number $n\geq N$ is found such that $f(n)\geq k+km/n$.
Because $f\not\in \mathcal{O}(1)$ such $n$ exists. Using Theorem~\ref{thm:main_recoding} construct a set $P$ of
at most $nm+k(n+m)\leq nm+f(n)n$ binary patterns of size $n\times m$. By Theorem~\ref{thm:main_recoding} tiles $T$ admit a valid tiling if
and only if $\SFT{P}$ is non-empty.
\end{proof}

Corollary~\ref{cor:goles_1} is stated for thin blocks of constant height $m$. It is also
worth to consider fat blocks, e.g., of square shape. By the analogous proof, using $m=n$ instead of constant $m$ we obtain the
following result where the additive term is almost linear in $n$.

\begin{corollary}
\label{cor:goles_2}
Let $f:\N\longrightarrow\N$ be a computable function, $f\not\in \mathcal{O}(1)$.
The following problem is undecidable: Given $n$ and a set $P$ of at most $n^2+f(n)n$ binary square patterns
of size $n\times n$, is $\SFT{P}$ empty~?
\end{corollary}
\begin{proof}
We proceed as in the proof of Corollary~\ref{cor:goles_1}, except that we choose $n$ such that $f(n)\geq 2k$. By Theorem~\ref{thm:main_recoding} we can effectively construct a set $P$ of at most $n^2+k(n+n)\leq n^2+f(n)n$ binary patterns of size $n\times n$ such that
$\SFT{P}$ is non-empty if and only if $T$ admits a valid tiling.
\end{proof}

In particular, for any real number $\varepsilon>0$
it is undecidable if a given set $P$ of at most $(1+\varepsilon)n^2$ square patterns of size $n\times n$ admit a valid configuration.

As usual, undecidability comes together with aperiodicity. We obtain pretty low complexity aperiodic SFTs.
\begin{corollary}
\label{cor:goles_3}
Let $f:\N\longrightarrow\N$ be a function, $f\not\in \mathcal{O}(1)$.
There exists $n$ and an aperiodic SFT $\SFT{P}$
where $P$ consists of at most $n^2+f(n)n$ binary square patterns
of size $n\times n$. Also, for every fixed height $m\geq 2$,
there exists a width $n$ and an aperiodic SFT $\SFT{P'}$
where $P'$ consists of at most $nm+f(n)n$ binary rectangular patterns
of size $n\times m$.
\end{corollary}
\begin{proof}
Let $T$ be an aperiodic Wang tile set. Let $N$ and $k$ be as in
Theorem~\ref{thm:main_recoding}, and let $n\in\N$ be such that $f(n)\geq 2k$.
By Theorem~\ref{thm:main_recoding} there is a collection $P$ of at most $n^2+k(n+n)\leq n^2+f(n)n$ binary $n\times n$
patterns such that $\SFT{P}$ is aperiodic. For fixed $m$, choosing $n$ such that $f(n)\geq k+km/n$
gives $P'$ in the second claim.
\end{proof}

\section{Removing one-sided determinism}
\label{sec:main}

In this section we prove Theorem~\ref{thm:main} by showing how we can ``remove'' one-sided directions of determinism
from subshifts with annihilators.

Let $c$ be a configuration over alphabet $A\subseteq\Z$ that has a non-trivial annihilator.
By Theorem~\ref{th:decompo} it has then an annihilator $\phi_1\cdots\phi_m$ where each $\phi_i$ is of the form
\begin{equation}
\label{eq:phi}
\phi_i=x^{n_i}y^{m_i}-1 \mbox{ for some } \vec{v}_i=(n_i,m_i)\in\Z^2.
\end{equation}
Moreover, vectors $\vec{v}_i$ can be chosen pairwise linearly independent, that is, in different directions. If $m=0$ it means that $c=0$, therefore we may assume $m\geq 1$.

Denote $X=\overline{{\cal  O}(c)}$, the subshift generated by $c$. A polynomial that annihilates $c$ annihilates all
elements of $X$, because they only have local patterns that already appear in $c$. It is easy to see that $X$ can only be non-deterministic in
a direction that is perpendicular to one of the  directions $\vec{v}_i$ of the polynomials $\phi_i$:

\begin{proposition}
\label{prop:detann}
Let $c$ be a configuration annihilated by $\phi_1\cdots\phi_m$ where each $\phi_i$ is of the form \emph{(\ref{eq:phi})}.
Let $\vec{u}\in\Z^2$ be a direction that is not perpendicular to $\vec{v}_i$ for any $i\in\{1,\ldots, m\}$. Then $X=\overline{{\cal  O}(c)}$
is deterministic in direction $\vec{u}$.
\end{proposition}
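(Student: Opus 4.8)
The plan is to reduce the determinism statement to a linear recurrence read off from the annihilating polynomial. Write $f=\phi_1\cdots\phi_m$ and recall, as noted just before the proposition, that $f$ annihilates not only $c$ but every configuration $c'\in X$, so $(fc')_{\vec n}=\sum_{\vec t}f_{\vec t}\,c'_{\vec n-\vec t}=0$ for all $\vec n\in\Z^2$. First I would order the cells of $\Z^2$ by the integer-valued height $\vec x\mapsto\inner{\vec x}{\vec u}$, so that $H_{\vec u}$ is exactly the set of cells of negative height. The goal then becomes the following: the cells of height $\ge 0$ of any $c'\in X$ are forced by its negative-height cells through the relation $fc'=0$, and this is precisely determinism of $X$ in direction $\vec u$.

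The crucial structural point is that $f$ has a unique extremal monomial in the direction $\vec u$. The support of $f$ is contained in $\{\sum_{i\in S}\vec v_i\mid S\subseteq\{1,\dots,m\}\}$, and a point $\sum_{i\in S}\vec v_i$ has height $\sum_{i\in S}\inner{\vec v_i}{\vec u}$. Because $\vec u$ is not perpendicular to any $\vec v_i$, every term $\inner{\vec v_i}{\vec u}$ is nonzero, so this sum is \emph{strictly} minimized by the unique subset $S_-=\{i\mid\inner{\vec v_i}{\vec u}<0\}$. Hence $\vec p_-=\sum_{i\in S_-}\vec v_i$ is the single support point of minimal height, and since $S_-$ is the only subset realizing it, its coefficient in $f$ is a lone $\pm1$ rather than a possibly-cancelling sum. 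This is the step I expect to carry the real content: isolating a genuinely extremal monomial whose coefficient is invertible over $\Z$. It is exactly here that the hypothesis $\vec u\not\perp\vec v_i$ is indispensable, since a perpendicular $\vec v_i$ would let the two monomials of $\phi_i$ share a height and destroy the uniqueness.

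With this leading monomial in hand the recurrence can be solved upward. For a target cell $\vec m$, I would apply $(fc')_{\vec n}=0$ with $\vec n=\vec m+\vec p_-$: the summand $\vec t=\vec p_-$ contributes $f_{\vec p_-}c'_{\vec m}$ with $f_{\vec p_-}=\pm1$, while every other $\vec t$ in the support satisfies $\inner{\vec n-\vec t}{\vec u}<\inner{\vec m}{\vec u}$, because $\vec p_-$ is the strict height-minimizer of the support. Solving for $c'_{\vec m}$ thus writes it as an integer combination of cell values of strictly smaller height. I would then induct on the height $k\ge 0$: assuming all cells of height $<k$ are determined by $c'|_{H_{\vec u}}$, the displayed relation determines every cell of height exactly $k$. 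It follows that if $c',c''\in X$ agree on $H_{\vec u}$ they agree everywhere, giving determinism of $X$ in direction $\vec u$. The only routine points remaining are that heights are integers, so the induction is well-founded, and that the support analysis needs not that the $\vec v_i$ be distinct but only that each $\inner{\vec v_i}{\vec u}\neq 0$.
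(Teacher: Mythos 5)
Your proof is correct, but it takes a genuinely different route from the paper's. The paper argues by contradiction and keeps the annihilator in factored form: given $d,e\in X$ agreeing on $H_{\vec u}$ with $d\neq e$, it peels off factors from $\phi_1\cdots\phi_m$ applied to $\Delta=d-e$ until it isolates a nonzero $\Delta'=\phi_1\cdots\phi_{i-1}\Delta$ with $\phi_i\Delta'=0$; then $\Delta'$ is $\vec v_i$-periodic and vanishes on a translated half-plane whose boundary is not parallel to $\vec v_i$, so the periodicity propagates the zeros everywhere and forces $\Delta'=0$, a contradiction. You instead expand the product, observe that the height functional $\vec x\mapsto\inner{\vec x}{\vec u}$ is strictly minimized on the support at the single point $\vec p_-=\sum_{i:\inner{\vec v_i}{\vec u}<0}\vec v_i$, note that this monomial cannot suffer cancellation and has coefficient $\pm1$, and then solve the resulting recurrence upward by induction on height. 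Both arguments hinge on exactly the same use of the hypothesis (a perpendicular $\vec v_i$ would either stall the zero-propagation in the paper's version or collapse the two monomials of $\phi_i$ onto one height level in yours), and your observation that only $\inner{\vec v_i}{\vec u}\neq 0$ is needed, not pairwise independence of the $\vec v_i$, matches the paper, which also reserves the independence for later lemmas. What your version buys is an explicit, constructive determinism: each unknown cell is an integer linear combination of strictly lower cells, which in particular makes the existence of the finite determining box $B_{\vec u}^k$ immediate. What the paper's version buys is brevity and the reusable zero-propagation lemma, which it invokes again (for the product $\phi_2\cdots\phi_m$) inside the proof of Lemma~\ref{lem:onesidedremoved}.
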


\begin{proof}
Suppose $X$ is not deterministic in direction $\vec{u}$. By definition, there exist $d,e\in X$ such that $d\neq e$ but
$d|_{H_{\vec{u}}}=e|_{H_{\vec{u}}}$. Denote $\Delta=d-e$. Because $\Delta\neq 0$  but  $\phi_1\cdots\phi_m\cdot \Delta =0$, for some
$i$ we have $\phi_1\cdots\phi_{i-1}\cdot \Delta \neq 0$ and $\phi_1\cdots\phi_{i}\cdot \Delta =0$.
Denote $\Delta'=\phi_1\cdots\phi_{i-1}\cdot \Delta$. Because $\phi_i\cdot\Delta'=0$, configuration $\Delta'$ is periodic in direction $\vec{v}_i$.
But because $\Delta$ is zero in the half plane ${H_{\vec{u}}}$, also $\Delta'$
is zero in some translate $H'={H_{\vec{u}}}-\vec{t}$ of the half plane.
Since the periodicity vector $\vec{v}_i$ of $\Delta'$ is not perpendicular to $\vec{u}$, the periodicity transmits the values 0 from
the region $H'$ to the entire $\Z^2$. Hence $\Delta'=0$, a contradiction.
\end{proof}

Let $\vec{u}\in\Z^2$ be a one-sided direction of determinism of  $X$.
In other words, $\vec{u}$ is a direction of determinism but $-\vec{u}$ is not. By the proposition above, $\vec{u}$ is perpendicular
to some $\vec{v}_i$. Without loss of generality, we may assume $i=1$.
We denote $\phi=\phi_1$ and $\vec{v}=\vec{v}_1$.

Let $k$ be such that the contents of the discrete box $B=B_{\vec{u}}^{k}$ determine the content of cell $\vec{0}$, that is, for $d,e\in X$
\begin{equation}
\label{eq:detshape}
d|_B=e|_B \Longrightarrow d_{\vec{0}}=e_{\vec{0}}.
\end{equation}
As pointed out in Section~\ref{sec:dynamical}, any sufficiently large $k$ can be used.
We can choose $k$ so that $k>|\inner{\vec{u}^\bot}{\vec{v}}|$.
To shorten notations, let us also denote $H=H_{-\vec{u}}$.

\begin{lemma}
\label{lem:apulemma}
For any $d,e\in X$ such that $\phi d=\phi e$ holds:
\[
d|_B=e|_B \Longrightarrow d|_H=e|_H.
\]
\end{lemma}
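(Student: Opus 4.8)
The plan is to pass to the difference $\Delta = d-e$ and exploit that the hypothesis $\phi d = \phi e$ forces $\Delta$ to be \emph{periodic} in a direction running \emph{along} the boundary of the half plane. This turns the purely local agreement on the small box $B$ into agreement on an entire boundary strip, after which ordinary determinism in direction $\vec{u}$ pushes the agreement across all of $H$.

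First I would set $\Delta = d-e$ and note that $\phi d = \phi e$ gives $\phi\Delta = (x^{n_1}y^{m_1}-1)\Delta = 0$, so $\Delta$ is periodic with vector $\vec{v}=\vec{v}_1$. Since $\vec{u}\perp\vec{v}$, the vector $\vec{v}$ is parallel to $\vec{u}^\bot$, i.e.\ the periodicity of $\Delta$ runs along the boundary line $\inner{\vec{x}}{\vec{u}}=0$. The assumption $d|_B=e|_B$ says that $\Delta$ vanishes on $B=B_{\vec{u}}^k$, which occupies the heights $-k<\inner{\vec{x}}{\vec{u}}<0$.

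Next I would spread this zero to the whole strip $S=\{\vec{x}\ |\ -k<\inner{\vec{x}}{\vec{u}}<0\}$. Translating by $\vec{v}$ preserves the height $\inner{\vec{x}}{\vec{u}}$ (because $\inner{\vec{v}}{\vec{u}}=0$) and shifts the transversal coordinate $\inner{\vec{x}}{\vec{u}^\bot}$ by $\inner{\vec{v}}{\vec{u}^\bot}$, whose absolute value is strictly smaller than $k$ by the choice $k>|\inner{\vec{u}^\bot}{\vec{v}}|$. Consequently every lattice cell of $S$ can be brought inside $B$ by some integer multiple of $\vec{v}$ without leaving the strip, and by $\vec{v}$-periodicity $\Delta$ then vanishes on all of $S$. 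Finally I would propagate across $H=H_{-\vec{u}}=\{\vec{x}\ |\ \inner{\vec{x}}{\vec{u}}>0\}$ by induction on the height $h=\inner{\vec{t}}{\vec{u}}\geq 0$: the translated determinism statement (the translate of~(\ref{eq:detshape})) gives that $d|_{B+\vec{t}}=e|_{B+\vec{t}}$ forces $d_{\vec{t}}=e_{\vec{t}}$, and the translated box $B+\vec{t}$ occupies heights $h-k<\inner{\vec{x}}{\vec{u}}<h$, which for $h\geq 0$ lie inside the already-established zero region of heights below $h$. Because that zero region is at each stage a union of full layers (infinite along $\vec{u}^\bot$), the box $B+\vec{t}$ is entirely covered regardless of its position along $\vec{u}^\bot$, so $\Delta_{\vec{t}}=0$. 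Iterating over $h=0,1,2,\dots$ yields $\Delta=0$ on all heights $\geq 0$, in particular $d|_H=e|_H$.

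The hard part will be the strip-filling step: one must verify that the period $\vec{v}$, being parallel to $\vec{u}^\bot$ and of small transversal width $|\inner{\vec{u}^\bot}{\vec{v}}|<k$, truly reaches every lattice point of the \emph{infinite} strip $S$ starting only from the bounded box $B$. This is precisely what the technical choice of $k$ secures, and it is exactly the mechanism that converts the local hypothesis $d|_B=e|_B$ into the global vanishing on $S$ that seeds the determinism induction; the remaining propagation is then a routine layer-by-layer application of the determinism shape.
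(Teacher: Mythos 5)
Your proposal is correct and follows essentially the same route as the paper: pass to $\Delta=d-e$, use $\phi\Delta=0$ to spread the vanishing from $B$ to the full strip $S$ via the period $\vec{v}\perp\vec{u}$ (which is where the choice $k>|\inner{\vec{u}^\bot}{\vec{v}}|$ is used), and then apply translates of the determinism property~(\ref{eq:detshape}) layer by layer to cover $H$. The paper compresses the final propagation into ``applying (\ref{eq:detshape}) on suitable translates,'' whereas you spell out the induction on the height $\inner{\vec{t}}{\vec{u}}$; the content is the same.
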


\begin{proof}
Let $d,e\in X$ be such that $\phi d=\phi e$ and $d|_B=e|_B$.
Denote $\Delta=d-e$. Then $\phi\Delta=0$ and $\Delta|_B=0$.
Property $\phi\Delta=0$ means that $\Delta$ has periodicity vector $\vec{v}$, so this
periodicity transmits values 0 from the region $B$ to the stripe
\[
S=\bigcup_{i\in\Z}  (B+i\vec{v}) = \{\vec{x}\in\Z^2\ |\ -k< \inner{\vec{x}}{\vec{u}} < 0\},
\]
See Figure~\ref{fig:regions} for an illustration of the regions $H$, $B$ and $S$.
As $\Delta|_S=0$, we have that $d|_S=e|_S$. Applying (\ref{eq:detshape}) on suitable translates of $d$ and $e$ allows us to conclude that
$d|_H=e|_H$.
\end{proof}

\begin{figure}[ht]
  \centering
  \includegraphics[width=5.5cm]{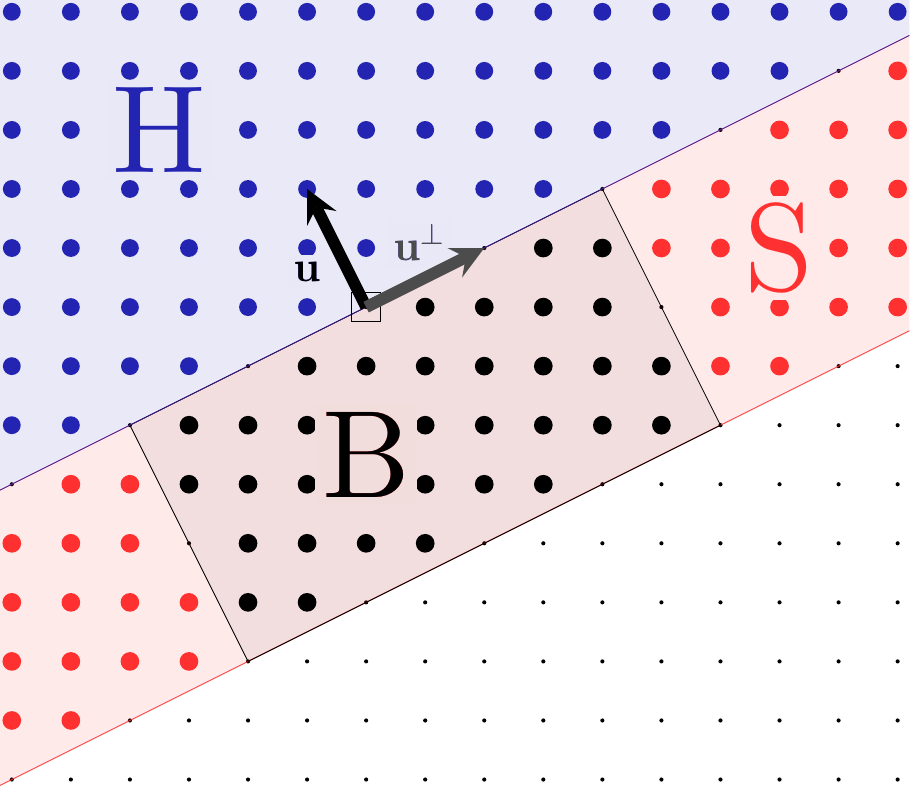}
  \caption{Discrete regions $H=H_{-\vec{u}}$, $B=B_{\vec{u}}^{k}$ and $S$ in the proof of Lemma~\ref{lem:apulemma}. In the illustration $\vec{u}=(-1,2)$ and $k=10$.}
  \label{fig:regions}
\end{figure}

A reason to prove the lemma above is the following corollary, stating that $X$ can only contain a bounded number of configurations that have
the same product with $\phi$:

\begin{corollary}
\label{cor:bounded}
Let $c_1,\ldots ,c_n\in X$ be pairwise distinct. If $\phi c_1=\cdots =\phi c_n$ then $n\leq |A|^{|B|}$.
\end{corollary}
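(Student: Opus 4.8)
The plan is to read the bound as a pigeonhole over the finitely many patterns that can occur on the box $B$. Since there are exactly $|A|^{|B|}$ functions $B\to A$, it suffices to prove that the restriction map $c_i\mapsto c_i|_B$ is injective on the set $\{c_1,\dots,c_n\}$: if distinct configurations forced distinct $B$-patterns, then $n$ could not exceed the number of available patterns, giving $n\le |A|^{|B|}$. So the entire content is the injectivity claim: for $d,e\in X$ with $\phi d=\phi e$, one should have $d|_B=e|_B\Rightarrow d=e$.

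First I would set $\Delta=d-e$ and record the two structural facts already available. Since $\phi d=\phi e$ we have $\phi\Delta=0$, so $\Delta$ has periodicity vector $\vec v$ and is therefore constant along every line parallel to $\vec v$; as $\vec v\perp\vec u$, each such line lies inside a single level $\{\inner{\vec x}{\vec u}=\text{const}\}$. Second, Lemma~\ref{lem:apulemma} (together with the strip argument inside its proof) gives $\Delta|_S=0$ on the strip $S=\{-k<\inner{\vec x}{\vec u}<0\}$ and $\Delta|_H=0$ on $H=H_{-\vec u}$, and the determinism step also fixes level $0$; hence $\Delta=0$ on all of $\{\inner{\vec x}{\vec u}>-k\}$. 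What remains is to exclude a nonzero $\vec v$-periodic $\Delta$ supported inside the lower half-plane $\{\inner{\vec x}{\vec u}\le -k\}$.

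To cut into this I would argue by descent on levels using determinism in direction $\vec u$. If $\Delta$ were nonzero and its support had a lowest level $\ell_*$, then $d$ and $e$ agree on every level below $\ell_*$, in particular on the $k-1$ levels immediately beneath $\ell_*$ that form a translate of the determining box $B$; determinism in direction $\vec u$ then forces the value of both configurations at every cell of level $\ell_*$, so $\Delta$ would vanish on level $\ell_*$, contradicting $\ell_*\in\operatorname{supp}\Delta$. The same reasoning shows that $\Delta$ cannot have $k-1$ consecutive all-zero levels below $\ell_*$. Thus a nonzero $\Delta$ would have support unbounded below, with no long gaps.

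The hard part will be closing exactly this gap: ruling out a $\vec v$-periodic difference whose support is an unbounded-below, gapless subset of the \emph{non}-deterministic half-plane, a configuration that on its own is perfectly consistent with $\vec u$ being only one-sidedly deterministic. Here the plan is to bring in that $\Delta$ is finitary: being $\vec v$-periodic, each level of $\Delta$ is one of only finitely many $\vec v$-periodic profiles, so the bi-infinite sequence of level-profiles, read as windows of $k$ consecutive levels, takes values in a finite set. Feeding these finitely many window-states into the up-determinism of $X$ and applying a pigeonhole should produce a genuine periodicity vector $\vec w$ with $\inner{\vec w}{\vec u}\ne 0$, transverse to the levels; such a $\vec w$, applied to the thick slab of zeros $\{\inner{\vec x}{\vec u}>-k\}$ on which $\Delta$ already vanishes, would transmit the value $0$ across every level and force $\Delta=0$. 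I expect the delicate and genuinely load-bearing point to be making this pigeonhole rigorous — in particular transferring periodicity of the finitary \emph{difference} into a periodicity of the configurations themselves, where the membership $d,e\in X=\overline{\mathcal O(c)}$ and possibly the remaining factors $\phi_2\cdots\phi_m$ of the annihilator will have to be used.
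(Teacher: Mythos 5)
Your plan hinges on an injectivity claim that is strictly stronger than what Lemma~\ref{lem:apulemma} provides and than what the corollary needs: you want that for $d,e\in X$ with $\phi d=\phi e$, $d|_B=e|_B$ already forces $d=e$. Lemma~\ref{lem:apulemma} only yields $d|_H=e|_H$ with $H=H_{-\vec u}$, and since $-\vec u$ is precisely the non-deterministic side in this setup, there is no mechanism to propagate the agreement into the remaining half-plane. Your own analysis makes this visible: you correctly get $\Delta=0$ on $\{\inner{\vec x}{\vec u}>-k\}$ and show the support of a nonzero $\Delta$ would be unbounded below, but the step that is supposed to finish the argument --- extracting a transverse periodicity vector for the difference $\Delta$ from a pigeonhole on level-profiles and the determinism of $X$ --- is left as a sketch that you yourself flag as the unresolved ``hard part.'' Determinism of $X$ is a statement about configurations in $X$, not about their differences, and nothing in the available hypotheses lets you convert finitely many level-profiles of $\Delta$ into an actual period. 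As written, the proof does not close.

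The paper sidesteps all of this with a translation trick: since the $c_i$ are pairwise distinct, each pair differs at some cell, and these finitely many witness cells can all be placed inside a single translate $H'=H-\vec t$ of the half-plane. Setting $d_i=\tau^{\vec t}(c_i)$, the $d_i$ are pairwise different already on $H$ (and still satisfy $\phi d_1=\cdots=\phi d_n$), so the \emph{contrapositive} of Lemma~\ref{lem:apulemma} immediately forces them to be pairwise different on $B$, giving $n\le|A|^{|B|}$. In other words, the pigeonhole is applied not to the restrictions of the original configurations to a fixed box, but to their restrictions after one common translation chosen so that the lemma's implication can be read backwards; no injectivity of $c\mapsto c|_B$ on the fiber of $\phi$ is needed or claimed. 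I recommend replacing your global-rigidity argument with this one-line reduction.
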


\begin{proof}
Let $H'=H-\vec{t}$, for $\vec{t}\in\Z^2$,  be a translate of the half plane $H=H_{-\vec{u}}$ such that $c_1,\ldots ,c_n$ are
pairwise different on $H'$. Consider the translated configurations $d_i=\tau^{\vec{t}}(c_i)$.
We have that $d_i\in X$ are pairwise different on $H$ and  $\phi d_1=\cdots =\phi d_n$.
By Lemma~\ref{lem:apulemma}, configurations $d_i$ must be pairwise different on domain $B$. There are only $|A|^{|B|}$ different patterns
in domain $B$.
\end{proof}

Let $c_1,\ldots ,c_n\in X$ be pairwise distinct such that $\phi c_1=\cdots =\phi c_n$, with $n$
as large as possible. By Corollary~\ref{cor:bounded} such a maximal $n$ exists.
Let us repeatedly translate the configurations $c_i$ by $\tau^{\vec{u}}$ and take a limit: by compactness there exists
$n_1<n_2<n_3\ldots$ such that
\[
d_i=\lim_{j\rightarrow \infty} \tau^{n_j\vec{u}}(c_i)
\]
exists for all $i\in\{1,\ldots, n\}$.
Configurations $d_i\in X$ inherit the following properties from $c_i$:

\begin{lemma}
\label{lem:dlemma}
Let $d_1,\ldots ,d_n$ be defined as above. Then
\begin{enumerate}
\item[(a)]  $\phi d_1=\cdots =\phi d_n$, and
\item[(b)] Configurations $d_i$ are pairwise different on translated discrete boxes $B'=B-\vec{t}$ for all $\vec{t}\in\Z^2$.
\end{enumerate}
\end{lemma}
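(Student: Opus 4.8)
The plan is to derive each of the two properties of the limit configurations $d_i$ directly from the corresponding property of the $c_i$, using only that the relevant conditions are preserved under translation and under taking limits in the product topology. Since $d_i = \lim_{j\to\infty}\tau^{n_j\vec{u}}(c_i)$, I would repeatedly invoke the basic principle stated in Section~\ref{sec:dynamical}: convergence means eventual agreement on every fixed cell, and every $X$ is closed under translations and limits.

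For part (a), the key observation is that the map $e \mapsto \phi e$ is continuous and commutes with translations. First I would note that applying $\phi$ to a translate gives $\phi\,\tau^{n_j\vec{u}}(c_i) = \tau^{n_j\vec{u}}(\phi c_i)$, because $\phi$ is a Laurent polynomial and translation corresponds to monomial multiplication, which is commutative. Since $\phi c_1 = \cdots = \phi c_n$ by assumption, all these translated products coincide: $\tau^{n_j\vec{u}}(\phi c_1) = \cdots = \tau^{n_j\vec{u}}(\phi c_n)$ for every $j$. Now I would pass to the limit. Because multiplication by the fixed polynomial $\phi$ is continuous (each coefficient of $\phi e$ is a finite integer combination of finitely many coefficients of $e$, so eventual agreement of the $\tau^{n_j\vec{u}}(c_i)$ on a large region forces eventual agreement of the products on any fixed cell), we get $\phi d_i = \lim_j \tau^{n_j\vec{u}}(\phi c_i)$ for each $i$. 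Since the sequences being limited are equal termwise across all $i$, their limits agree, giving $\phi d_1 = \cdots = \phi d_n$.

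For part (b), I would argue that the distinctness of the $c_i$ on every translated box $B' = B - \vec{t}$ is inherited by the $d_i$, again through the translation-limit machinery. Fix any $\vec{t}\in\Z^2$ and any two indices $p\neq q$. By Corollary~\ref{cor:bounded} applied to the maximality of $n$, the configurations $c_p$ and $c_q$ (indeed all the $c_i$) are pairwise distinct on $B'$ for every $\vec{t}$; more precisely, Lemma~\ref{lem:apulemma} combined with the maximality forces pairwise distinctness on translated copies of $B$. The point is that $d_p$ and $d_q$ must differ on $B'$. I would exhibit a cell where they differ by locating a witness cell for $c_p$ versus $c_q$ in a suitable translate $B' - n_j\vec{u}$ (which lies in some translated box), then transporting it forward under $\tau^{n_j\vec{u}}$ and passing to the limit so that the disagreement survives at a fixed cell of $B'$.

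The step I expect to require the most care is part (b): unlike an equality, an \emph{inequality} of coefficients is not automatically preserved under limits, since two sequences that differ at every finite stage can converge to the same limit. The resolution is to use that the $c_i$ differ on \emph{every} translate of $B$, so the disagreements cannot all ``escape to infinity'' along the direction $\vec{u}$; concretely, for each fixed $B'$ I must locate a single cell of $B'$ on which $\tau^{n_j\vec{u}}(c_p)$ and $\tau^{n_j\vec{u}}(c_q)$ disagree for infinitely many $j$, which (after passing to a further subsequence if needed, still available by compactness) yields a genuine disagreement of the limits $d_p$ and $d_q$ on that cell. Making this ``disagreement persists in the limit'' argument precise, by using the translation-invariance of the family of boxes $\{B - \vec{t}\}$ and the finiteness of $B$, is the heart of the proof.
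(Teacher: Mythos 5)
Part (a) of your proposal is correct and matches the paper's proof exactly: commute $\phi$ with the translations, use continuity of $e\mapsto\phi e$, and pass to the limit.

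Part (b), however, rests on a premise that is not established and is in general false: you claim that the configurations $c_i$ are ``pairwise distinct on $B'=B-\vec{t}$ for every $\vec{t}$,'' attributing this to Corollary~\ref{cor:bounded} and the maximality of $n$. Neither gives you that. The proof of Corollary~\ref{cor:bounded} only produces \emph{one} translate of $B$ on which the $c_i$ are pairwise distinct (a translate chosen so that the corresponding half-plane contains witness cells for all pairs); and if $c_p$ and $c_q$ happened to agree on some other translate $B-\vec{t}$, Lemma~\ref{lem:apulemma} would only force them to agree on the half-plane $H-\vec{t}$, not everywhere, so no contradiction with their distinctness or with the maximality of $n$ arises. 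Indeed, the whole point of passing from the $c_i$ to the limits $d_i$ is to \emph{manufacture} the property ``distinct on every translate of $B$,'' which the $c_i$ need not have; assuming it for the $c_i$ begs the question. Your instinct that disagreement is not preserved under limits is exactly the right worry, but the resolution is not the one you give. The paper argues part (b) by contradiction in the opposite direction: if $d_p|_{B'}=d_q|_{B'}$, then by convergence $\tau^{n_j\vec{u}}(c_p)$ and $\tau^{n_j\vec{u}}(c_q)$ agree on $B'$ for all large $j$, i.e.\ $c_p$ and $c_q$ agree on $B-n_j\vec{u}-\vec{t}$; Lemma~\ref{lem:apulemma} (applied to suitable translates) upgrades each such box-agreement to agreement on the half-plane $H_{-\vec{u}}-n_j\vec{u}-\vec{t}$, and since these half-planes exhaust $\Z^2$ as $j\to\infty$, one gets $c_p=c_q$, contradicting the pairwise distinctness of the $c_i$. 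You would need to replace your ``witness cell transport'' argument with this propagation-and-exhaustion argument for part (b) to go through.
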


\begin{proof} Let $i_1,i_2\in\{1,\ldots ,n\}$ be arbitrary, $i_1\neq i_2$.
\medskip

\noindent
(a) Because $\phi c_{i_1} = \phi c_{i_2}$ we have, for any $n\in\N$,
\[\phi \tau^{n\vec{u}}(c_{i_1}) = \tau^{n\vec{u}}(\phi c_{i_1}) = \tau^{n\vec{u}}(\phi c_{i_2}) = \phi \tau^{n\vec{u}}(c_{i_2}).\]
Function $c\mapsto \phi c$
is continuous in the topology so
\[\phi d_{i_1} =\phi \lim_{j\rightarrow \infty} \tau^{n_j\vec{u}}(c_{i_1}) = \lim_{j\rightarrow \infty} \phi\tau^{n_j\vec{u}}(c_{i_1})
= \lim_{j\rightarrow \infty} \phi\tau^{n_j\vec{u}}(c_{i_2}) = \phi \lim_{j\rightarrow \infty} \tau^{n_j\vec{u}}(c_{i_2}) = \phi d_{i_2}.\]

\medskip

\noindent
(b) Let $B'=B-\vec{t}$ for some $\vec{t}\in\Z^2$.
Suppose $d_{i_1}|_{B'} = d_{i_2}|_{B'}$. By the definition of convergence, for all sufficiently large $j$ we have
$\tau^{n_j\vec{u}}(c_{i_1})|_{B'} = \tau^{n_j\vec{u}}(c_{i_2})|_{B'}$. This is equivalent to
$\tau^{n_j\vec{u}+\vec{t}}(c_{i_1})|_{B} = \tau^{n_j\vec{u}+\vec{t}}(c_{i_2})|_{B}$.
By Lemma~\ref{lem:apulemma} then also $\tau^{n_j\vec{u}+\vec{t}}(c_{i_1})|_{H} = \tau^{n_j\vec{u}+\vec{t}}(c_{i_2})|_{H}$
where $H=H_{-\vec{u}}$. This means that for all sufficiently large
$j$ the configurations $c_{i_1}$ and $c_{i_2}$ are identical on the domain $H-n_j\vec{u}-\vec{t}$. But these domains cover the whole $\Z^2$
as $j\longrightarrow\infty$ so that $c_{i_1}=c_{i_2}$, a contradiction.
\end{proof}

Now we pick one of the configurations $d_i$ and consider its orbit closure. Choose $d=d_1$ and
set $Y=\overline{{\cal  O}(d)}$. Then $Y\subseteq X$. Any direction of determinism in $X$
is also a direction of determinism in $Y$. Indeed, this is trivially true for any subset of $X$.
But, in addition, we have the following:

\begin{lemma}
\label{lem:onesidedremoved}
Subshift $Y$ is deterministic in direction $-\vec{u}$.
\end{lemma}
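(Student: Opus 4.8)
The plan is to argue by contradiction: suppose $Y=\overline{{\cal O}(d)}$ is not deterministic in direction $-\vec{u}$, so that there are $e,e'\in Y$ with $e|_H=e'|_H$ (recall $H=H_{-\vec{u}}$) but $e\neq e'$. Writing $\Delta=e-e'$ we have $\Delta\neq 0$ and $\Delta|_H=0$. The first thing I would establish is that $e$ and $e'$ have the \emph{same} image under $\phi$. Since $\vec{v}=\vec{v}_1$ is perpendicular to $\vec{u}$, the cells $\vec{x}$ and $\vec{x}-\vec{v}$ always lie on the same level $\inner{\cdot}{\vec{u}}$; hence $(\phi\Delta)_{\vec{x}}=\Delta_{\vec{x}-\vec{v}}-\Delta_{\vec{x}}$ vanishes whenever $\vec{x}\in H$, i.e. $\phi\Delta$ is zero on $H$. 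Moreover $\phi$ is continuous and $Y=\overline{{\cal O}(d)}$, so $\phi e,\phi e'\in\overline{{\cal O}(P)}$ where $P=\phi d$. Now $P$ is annihilated by $\phi_2\cdots\phi_m$ (the full product annihilates all of $X$, hence $d$), and since $\vec{v}_1,\ldots,\vec{v}_m$ are pairwise independent and $\vec{u}\perp\vec{v}_1$, the direction $-\vec{u}$ is not perpendicular to any of $\vec{v}_2,\ldots,\vec{v}_m$. By \cref{prop:detann} applied to $P$, the subshift $\overline{{\cal O}(P)}$ is therefore deterministic in direction $-\vec{u}$ (the degenerate case $m=1$ forces $P=0$, so this is trivial). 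As $\phi e$ and $\phi e'$ agree on $H$ and both lie in this $-\vec{u}$-deterministic subshift, I conclude $\phi e=\phi e'$, i.e. $\phi\Delta=0$ and $\Delta$ is periodic in direction $\vec{v}$.

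At this point $\Delta$ is a nonzero, $\vec{v}$-periodic configuration vanishing on the whole half plane $H$. Because $\vec{v}\perp\vec{u}$, periodicity moves only within levels, so $\Delta$ vanishes on every level $\inner{\cdot}{\vec{u}}>0$ and is nonzero on some topmost level, which after a translation I may take to be $0$. The second and decisive step is to turn this seam into a contradiction with the maximality of $n$. Recall that $d_1,\ldots,d_n\in X$ all share the $\phi$-image $P$ and, by \cref{lem:dlemma}(b), are pairwise distinct on every translate $B-\vec{t}$ of the box; by \cref{cor:bounded} and the maximality of $n$ they constitute the entire fibre $\{f\in X: \phi f=P\}$. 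The idea is to use the $\vec{v}$-periodic difference $\Delta$, which is realised inside $d$ since $e,e'\in\overline{{\cal O}(d)}$, to manufacture an $(n+1)$-st configuration of $X$ lying in a single $\phi$-fibre: approximating $e$ and $e'$ by translates of $d$ on larger and larger windows and extracting a limit (exactly as in the construction of the $d_i$), one should obtain a configuration that agrees with a fibre element on the half plane above the seam yet is forced to differ from all $n$ of them on some box translate, contradicting \cref{cor:bounded}.

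The main obstacle is precisely this last construction: adding $\Delta$ to a fibre element need not stay inside the alphabet $A$, so the extra configuration cannot be produced by naive algebraic superposition and must instead be obtained by a compactness argument that respects membership in $X$. The delicate points are (i) ensuring the limit configuration genuinely lies in $X$ rather than merely in $A^{\Z^2}$, and (ii) guaranteeing that it is distinct from each $d_i$ on some box, for which \cref{lem:dlemma}(b) --- distinctness on \emph{every} box translate --- together with the choice $k>|\inner{\vec{u}^{\bot}}{\vec{v}}|$ and \cref{lem:apulemma} should be exactly what is required. Once the $(n+1)$-st fibre element is in hand, the contradiction with the maximal choice of $n$ completes the argument, so that $Y$ is deterministic in $-\vec{u}$ and the one-sided direction $\vec{u}$ of $X$ has become two-sided in $Y$.
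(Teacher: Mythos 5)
Your first step is correct and essentially matches the paper: you show $\phi e=\phi e'$ by observing that $\phi\Delta$ vanishes on $H$ (using $\vec{v}\perp\vec{u}$) and then invoking determinism in direction $-\vec{u}$ of a subshift annihilated by $\phi_2\cdots\phi_m$ via Proposition~\ref{prop:detann}. The paper applies that proposition to the difference $\phi e-\phi e'$ while you apply it to $\phi d$ and use $\phi e,\phi e'\in\overline{{\cal O}(\phi d)}$; both are fine. (Your digression about the ``topmost nonzero level'' of $\Delta$ is a red herring: it plays no role in the rest of the argument.)

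The genuine gap is the second step, which you yourself flag as ``the main obstacle'' and leave unexecuted: you never actually produce the $(n+1)$ configurations sharing a $\phi$-image, nor verify their pairwise distinctness. This is the heart of the lemma, and the ingredients you list do suffice, assembled as follows. Write $x=e$, $y=e'$, and let $\tau_1,\tau_2,\ldots$ be translations with $x=\lim_i\tau_i(d_1)$. Apply these \emph{same} translations to all of $d_1,\ldots,d_n$ and extract a jointly convergent subsequence, giving $e_i=\lim_j\tau_{k_j}(d_i)$ with $e_1=x$. Membership in $X$ is automatic ($X$ is closed and translation invariant), so your worry (i) is a non-issue; no ``superposition with $\Delta$'' is ever needed. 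By continuity of $c\mapsto\phi c$ and Lemma~\ref{lem:dlemma}(a), all $e_i$ share one $\phi$-image, which by the first step equals $\phi y$; so $e_1,\ldots,e_n,y$ lie in one fibre. For distinctness, the key point is that Lemma~\ref{lem:dlemma}(b) gives disagreement of $d_{i_1},d_{i_2}$ on \emph{every} translate $B-\vec{t}$ of the \emph{fixed finite} box $B$, and disagreement on a fixed finite window survives passage to the limit; hence $e_{i_1}|_{B'}\neq e_{i_2}|_{B'}$ for every translate $B'$. Finally, choosing $B'\subseteq H$ yields $y|_{B'}=x|_{B'}=e_1|_{B'}\neq e_i|_{B'}$ for $i\geq2$, and $y\neq e_1$ by hypothesis, so the fibre has $n+1$ elements, contradicting the maximality of $n$ (Corollary~\ref{cor:bounded}). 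Without this construction and the box-based distinctness argument, your proposal establishes only $\phi\Delta=0$, not the lemma.
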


\begin{proof}
Suppose the contrary: there exist configurations $x,y\in Y$ such that $x\neq y$ but $x|_H=y|_H$ where, as usual,
$H=H_{-\vec{u}}$.
In the following we construct $n+1$ configurations in $X$ that have the same product with $\phi$, which
contradicts the choice of $n$ as the maximum number of such configurations.

By the definition of $Y$ all elements of $Y$ are limits of sequences of translates of $d=d_1$, that is, there are translations
$\tau_1,\tau_2,\ldots$ such that $x=\lim_{i\rightarrow\infty} \tau_i(d)$, and translations
$\sigma_1,\sigma_2,\ldots$ such that $y=\lim_{i\rightarrow\infty} \sigma_i(d)$.
Apply the translations $\tau_1,\tau_2,\ldots$ on configurations $d_1,\ldots ,d_n$, and take jointly converging subsequences: by compactness there are $k_1<k_2<\ldots$ such that
\[
e_i=\lim_{j\rightarrow\infty} \tau_{k_j}(d_i)
\]
exists for all $i\in\{1,\ldots ,n\}$. Here, clearly, $e_1=x$.

\medskip

Let us prove that $e_1,\ldots ,e_n$ and $y$ are $n+1$ configurations that
(i) have the same product with $\phi$, and (ii) are pairwise distinct. This contradicts the choice of $n$ as the maximum number of
such configurations, and thus completes the proof.
\medskip

\noindent
\begin{enumerate}
\item First, $\phi x=\phi y$: Because $x|_H=y|_H$ we have $\phi x|_{H-\vec{t}} = \phi y|_{H-\vec{t}}$ for some $\vec{t}\in\Z^2$.
  Consider $c'=\tau^{\vec{t}}(\phi x-\phi y)$, so that $c'|_H=0$.
  As $\phi_2\cdots \phi_m$ annihilates $\phi x$ and $\phi y$, it also annihilates $c'$.
  An application of Proposition~\ref{prop:detann} on configuration $c'$ in place of $c$
   shows that $\overline{{\cal  O}(c')}$ is deterministic in direction $-\vec{u}$.
   (Note that $-\vec{u}$ is not perpendicular to $\vec{v}_j$ for any $j\neq 1$, because
  $\vec{v}_1$ and $\vec{v}_j$ are not parallel and $-\vec{u}$ is perpendicular to $\vec{v}_1$.)
  Due to the determinism,  $c'|_H=0$ implies that $c'=0$, that is, $\phi x=\phi y$.

  Second, $\phi e_{i_1}=\phi e_{i_2}$ for all $i_1,i_2\in\{1,\ldots ,n\}$: By Lemma~\ref{lem:dlemma} we know that $\phi d_{i_1}=\phi d_{i_2}$. By continuity of the function $c\mapsto \phi c$  we then have
  \[
  \begin{array}{rc}
  \phi e_{i_1} = \phi \lim_{j\rightarrow\infty} \tau_{k_j}(d_{i_1}) = \lim_{j\rightarrow\infty} \phi \tau_{k_j}(d_{i_1}) =&
  \lim_{j\rightarrow\infty} \tau_{k_j}(\phi d_{i_1}) \\
  &\mbox{\verteq}\vspace*{-2mm}\\
  \phi e_{i_2} = \phi \lim_{j\rightarrow\infty} \tau_{k_j}(d_{i_2}) = \lim_{j\rightarrow\infty} \phi \tau_{k_j}(d_{i_2}) =&
  \lim_{j\rightarrow\infty} \tau_{k_j}(\phi d_{i_2})
  \end{array}
  \]
  Because $e_1=x$, we have shown that $e_1,\ldots ,e_n$ and $y$ all have the same product with $\phi$.


\item Pairwise distinctness: First, $y$ and $e_1=x$ are distinct by the initial choice of $x$ and $y$.
  Next, let $i_1,i_2\in\{1,\ldots ,n\}$ be such that $i_1\neq i_2$. Let $\vec{t}\in \Z^2$ be arbitrary and consider the translated
  discrete box $B'=B-\vec{t}$. By
  Lemma~\ref{lem:dlemma}(b) we have $\tau_{k_j}(d_{i_1})|_{B'}\neq \tau_{k_j}(d_{i_2})|_{B'}$ for all $j\in \N$, so taking the
  limit as $j\longrightarrow\infty$ gives $e_{i_1}|_{B'}\neq e_{i_2}|_{B'}$. This proves that $e_{i_1}\neq e_{i_2}$. Moreover, by
  taking $\vec{t}$ such that $B'\subseteq H$ we see that $y|_{B'}=x|_{B'}=e_1|_{B'}\neq e_i|_{B'}$ for $i\geq 2$, so that
  $y$ is also distinct from all $e_i$ with $i\geq 2$.
\end{enumerate}
\vspace{-1em}
\end{proof}

\noindent
The following proposition captures the result established above.
\begin{proposition}
\label{prop:main}
Let $c$ be a configuration with a non-trivial annihilator. If $\vec{u}$
is a  one-sided  direction of determinism in $\overline{{\cal  O}(c)}$ then
there is a configuration $d\in \overline{{\cal  O}(c)}$ such that  $\vec{u}$
is a  two-sided  direction of determinism in  $\overline{{\cal  O}(d)}$.
\end{proposition}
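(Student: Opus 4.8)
The plan is to assemble the machinery built up in Lemmas~\ref{lem:apulemma}--\ref{lem:onesidedremoved} into this single packaged statement, so the proof should be short. First I would recall that, by Proposition~\ref{prop:detann}, a one-sided direction of determinism $\vec{u}$ of $X=\overline{\mathcal{O}(c)}$ must be perpendicular to one of the vectors $\vec{v}_i$ appearing in the annihilator $\phi_1\cdots\phi_m$ supplied by Theorem~\ref{th:decompo}. After relabeling I set $\phi=\phi_1$ and $\vec{v}=\vec{v}_1$, and I fix $k$ large enough that the box $B=B_{\vec{u}}^k$ determines cell $\vec{0}$ and also $k>|\inner{\vec{u}^\bot}{\vec{v}}|$. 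This places us exactly in the hypotheses under which the preceding lemmas were proved.

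Next I would take the promised configuration to be $d=d_1$ constructed above: a member of a family $c_1,\ldots,c_n\in X$ with $\phi c_1=\cdots=\phi c_n$ of maximal cardinality $n$ (which exists by Corollary~\ref{cor:bounded}), pushed to a limit along translates $\tau^{\vec{u}}$. Setting $Y=\overline{\mathcal{O}(d)}\subseteq X$, there are then only two things to check: that $\vec{u}$ is a direction of determinism in $Y$, and that $-\vec{u}$ is as well, which together say precisely that $\vec{u}$ is a direction of two-sided determinism in $\overline{\mathcal{O}(d)}$.

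Determinism of $Y$ in direction $\vec{u}$ is immediate: since $Y\subseteq X$ and $X$ is deterministic in direction $\vec{u}$ by hypothesis, the defining implication $x|_{H_{\vec{u}}}=x'|_{H_{\vec{u}}}\Rightarrow x=x'$ holds a fortiori for pairs drawn from the smaller set $Y$. Hence determinism always passes to subshifts, and no extra argument is needed here. Determinism of $Y$ in direction $-\vec{u}$ is exactly the content of Lemma~\ref{lem:onesidedremoved}, which I would simply invoke. Combining the two conclusions yields that $\vec{u}$ is a two-sided direction of determinism in $\overline{\mathcal{O}(d)}$, completing the proof.

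I expect all the genuine difficulty to sit inside Lemma~\ref{lem:onesidedremoved} rather than in this proposition. The delicate point is that the limiting configuration $d$ could in principle fail to be deterministic in direction $-\vec{u}$, and the mechanism forbidding this is the maximality of $n$: a failure of $-\vec{u}$-determinism in $Y$ manufactures an $(n+1)$st configuration sharing the same $\phi$-product (the configurations $e_1,\ldots,e_n$ together with $y$), contradicting the choice of $n$ secured by Corollary~\ref{cor:bounded}. Since that lemma is already established, the present proposition reduces to the bookkeeping above, and I would keep its proof to a few lines, citing Proposition~\ref{prop:detann} and Lemma~\ref{lem:onesidedremoved} with the reminder that determinism is inherited by subshifts.
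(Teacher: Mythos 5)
Your proposal is correct and follows essentially the same route as the paper, which likewise treats the proposition as a summary of the preceding construction: reduce via Proposition~\ref{prop:detann} to $\vec{u}\perp\vec{v}_1$, take $d=d_1$ from a maximal family with equal $\phi$-product (bounded by Corollary~\ref{cor:bounded}), note that determinism in direction $\vec{u}$ is inherited by the subshift $Y=\overline{{\cal O}(d)}\subseteq X$, and invoke Lemma~\ref{lem:onesidedremoved} for direction $-\vec{u}$. No gaps.
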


\begin{proof}
  Let $c$ be a configuration with a non-trivial annihilator and $\vec u$ a one-sided  direction of determinism in $\overline{{\cal  O}(c)}$.
  Then, consider $c_1, \hdots, c_n$ as in \cref{cor:bounded} and $n$ as large as possible.
  Then, by taking \[ d = \lim_{j\rightarrow \infty} \tau^{n_j\vec{u}}(c_1) ,\]
  \cref{lem:onesidedremoved} ensures that $Y=\overline{{\cal  O}(d)}$ is deterministic both in directions $\vec u$ and $- \vec u$, as none of the limits used change the determinism along $\vec u$.
\end{proof}

\noindent
Now we are ready to prove Theorem~\ref{thm:main}.

\begin{proof}[Proof of Theorem~\ref{thm:main}]
Let $c$ be a two-dimensional configuration that has a non-trivial annihilator. Every non-empty subshift contains a minimal subshift~\cite{birkhoff}, and hence there
is a uniformly recurrent configuration $c'\in\overline{{\cal  O}(c)}$. If $\overline{{\cal  O}(c')}$ has a one-sided direction of determinism $\vec{u}$, we can apply
Proposition~\ref{prop:main} on $c'$ and find $d\in\overline{{\cal  O}(c')}$ such that $\vec{u}$ is a two-sided direction of determinism in  $\overline{{\cal  O}(d)}$.
But because $c'$ is uniformly recurrent, $\overline{{\cal  O}(d)}=\overline{{\cal  O}(c')}$, a contradiction.
%
%
%
\end{proof}


\section{Periodicity in low complexity subshifts}
\label{sec:periodic}

In this section we prove Theorem~\ref{thm:periodic}. Every non-empty subshift contains  a uniformly recurrent configuration, so we can safely assume that $c$ is uniformly recurrent.

Our proof of Theorem~\ref{thm:periodic} splits in two cases based on Theorem~\ref{thm:main}: either $\overline{{\cal  O}(c)}$ is deterministic in all directions or for some $\vec{u}$ it is non-deterministic in both directions $\vec{u}$ and $-\vec{u}$. The first case is handled by the following well-known corollary from a theorem of Boyle and Lind~\cite{Boyle_Lind}:

\begin{proposition}
\label{prop:case1}
  A configuration $c$ is two-periodic if and only if $\overline{{\cal  O}(c)}$ is deterministic in all directions.
\end{proposition}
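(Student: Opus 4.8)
The plan is to translate both sides of the equivalence into statements about the cardinality of $\overline{{\cal O}(c)}$ and then invoke Boyle--Lind. The bridge is the observation that $c$ is two-periodic exactly when its orbit ${\cal O}(c)$ is finite: a two-periodic $c$ has stabilizer $\{\vec{t}\in\Z^2 : \tau^{\vec{t}}(c)=c\}$ of finite index in $\Z^2$, so the orbit is finite; conversely, if ${\cal O}(c)$ is finite then the stabilizer has finite index and hence contains two linearly independent vectors, making $c$ two-periodic. Since a finite orbit is already closed, this is the same as saying $\overline{{\cal O}(c)}$ is finite. On the other side, ``deterministic in all directions'' means deterministic in direction $\vec{u}$ for every $\vec{u}$, and in particular in both $\vec{u}$ and $-\vec{u}$; thus every direction is one of two-sided determinism, which by the remark in Section~\ref{sec:dynamical} is the same as every direction being expansive in the sense of Boyle and Lind.

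For the forward implication I would argue that a finite orbit closure is automatically deterministic in every direction. If $c$ is two-periodic with period lattice $L$ (a finite-index sublattice of $\Z^2$), then every $d\in\overline{{\cal O}(c)}={\cal O}(c)$ is a translate of $c$ and hence also $L$-periodic. Fix a direction $\vec{u}$ and suppose $d,e\in\overline{{\cal O}(c)}$ agree on $H_{\vec{u}}$. Because $L$ is full rank it contains a vector $\vec{w}$ with $\inner{\vec{w}}{\vec{u}}<0$, so every coset of $L$ meets $H_{\vec{u}}$; the common values of $d$ and $e$ on $H_{\vec{u}}$ therefore pin down their values on a full set of coset representatives, and $L$-periodicity propagates this to all of $\Z^2$. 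Hence $d=e$, so $\overline{{\cal O}(c)}$ is deterministic in direction $\vec{u}$, and since $\vec{u}$ was arbitrary, in all directions.

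The backward implication is where Boyle--Lind enters. Assuming $\overline{{\cal O}(c)}$ is deterministic in all directions, every direction is expansive, i.e.\ $\overline{{\cal O}(c)}$ has no nonexpansive line. The relevant theorem of Boyle and Lind~\cite{Boyle_Lind} states that an infinite two-dimensional subshift always possesses at least one nonexpansive one-dimensional subspace; contrapositively, having no nonexpansive direction forces $\overline{{\cal O}(c)}$ to be finite. By the dictionary above, finiteness of $\overline{{\cal O}(c)}$ gives that $c$ is two-periodic, completing the equivalence.

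I expect the main obstacle to be the bookkeeping in matching vocabularies rather than any deep argument: one must verify carefully that the paper's ``two-sided determinism in direction $\vec{u}$'' coincides with expansiveness of the boundary line spanned by $\vec{u}^{\bot}$ in Boyle--Lind's framework, using compactness to pass from the two one-sided determinism radii to a single expansivity constant for a bounded strip around that line. One must also cite the Boyle--Lind statement in the exact form needed---that the set of nonexpansive $1$-dimensional subspaces of an infinite $\Z^2$ subshift is nonempty---and confirm that ``infinite subshift'' is precisely the negation of the finiteness that we identified with two-periodicity.
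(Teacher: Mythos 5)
Your overall strategy is exactly the one the paper intends: the paper offers no proof of this proposition at all (it is stated with a \qed as ``a well-known corollary from a theorem of Boyle and Lind''), and your write-up supplies the standard argument behind that citation. Your forward direction is complete and correct: two-periodicity is equivalent to the stabilizer having finite index, hence to $\overline{{\cal O}(c)}={\cal O}(c)$ being finite, and a full-rank period lattice meets every open half-plane, so periodicity propagates agreement on $H_{\vec{u}}$ to all of $\Z^2$. The dictionary between two-sided determinism in direction $\vec{u}$ and expansiveness of the line $\R\vec{u}^{\bot}$ is also as you describe.

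There is, however, one genuine gap in the backward direction, and it sits precisely at the point you wave at as ``bookkeeping.'' The paper's notion of determinism is defined only for $\vec{u}\in\Z^2\setminus\{\vec{0}\}$, so the hypothesis ``deterministic in all directions'' only tells you that every \emph{rational} line is expansive. Boyle and Lind's theorem produces a nonexpansive one-dimensional subspace of $\R^2$ for any infinite system, but that subspace may well be irrational: the set of nonexpansive lines is merely closed and nonempty, and there exist infinite $\Z^2$-subshifts whose unique nonexpansive line has irrational slope (such examples appear already in the expansive-subdynamics literature, and Hochman showed essentially any nonempty closed set of lines is realizable). So the implication ``expansive in every rational direction $\Rightarrow$ no nonexpansive line $\Rightarrow$ finite'' does not go through for an arbitrary configuration $c$, which is how the proposition is stated. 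To close the gap you must either read ``all directions'' as all real directions (extending the definition of $H_{\vec{u}}$ to $\vec{u}\in\R^2\setminus\{\vec{0}\}$), or use the context in which the proposition is actually applied: there $c$ has a nontrivial annihilator, and Proposition~\ref{prop:detann} shows that every direction not perpendicular to one of the finitely many integer vectors $\vec{v}_i$ is automatically deterministic, so all irrational lines are expansive for free and Boyle--Lind then does force finiteness.
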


For the second case we apply the technique by
Cyr and Kra~\cite{cyrkra}. This technique was also used in~\cite{szabados} to address Nivat's conjecture.
It is possible to use a direct combination of lemmas from~\cite{cyrkra} or~\cite{szabados} to prove the following:

\begin{proposition}
\label{prop:case2}
Let $c$ be a two-dimensional uniformly recurrent configuration that has low complexity with respect to a rectangle.
If for some $\vec{u}$ both $\vec{u}$ and $-\vec{u}$ are directions of non-determinism in $\overline{{\cal  O}(c)}$ then $c$ is periodic in a direction perpendicular to $\vec{u}$.
\end{proposition}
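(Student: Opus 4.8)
The plan is to produce a nonzero lattice vector parallel to $\vec u^\perp$ (equivalently, parallel to the boundary line $\ell=\{\vec x : \inner{\vec x}{\vec u}=0\}$ of the half plane $H_{\vec u}$) that is a vector of periodicity of $c$; this is exactly what ``periodic in a direction perpendicular to $\vec u$'' asks for. First I would record that $\vec u$ may be taken rational: since $c$ has low complexity it has a nonzero annihilator by Lemma~\ref{th:low_complexity}, hence an annihilator $\phi_1\cdots\phi_m$ with each $\vec v_i$ rational by Theorem~\ref{th:decompo}, and by Proposition~\ref{prop:detann} a direction of non-determinism is perpendicular to some $\vec v_i$, so $\vec u\parallel\vec v_i^{\perp}$ is rational and $\ell$ is a genuine rational line. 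I then fix a basis $\{\vec a,\vec b\}$ of $\Z^2$ with $\vec a$ the primitive lattice vector on $\ell$ (and, after possibly replacing $\vec b$ by $-\vec b$, with $\inner{\vec b}{\vec u}>0$). Writing a cell as $i\vec a+j\vec b$, the quantity $\inner{i\vec a+j\vec b}{\vec u}=j\inner{\vec b}{\vec u}$ depends only on $j$, so in these coordinates $H_{\vec u}$ is the lower half $\{j<0\}$, the line $\ell$ is the row $j=0$, and non-determinism in directions $\vec u$ and $-\vec u$ says exactly that the rows strictly below, respectively strictly above, a given row fail to determine it. The object to produce is then a period $q\vec a$.

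Second, I would reduce periodicity along $\vec a$ to a one-dimensional statement. Reading $c$ column by column along $\vec a$ turns it into a bi-infinite sequence whose $i$-th letter is the full column $(c_{i\vec a+j\vec b})_{j\in\Z}$, and $c$ admits the period $q\vec a$ precisely when this sequence has period $q$. By the Morse--Hedlund theorem it therefore suffices to exhibit a width $N$ for which the number of distinct width-$N$ vertical strips appearing in $c$ is at most $N$. To connect this to the hypothesis I would pass to finite-height truncations (bounding the width-$N$, height-$k$ parallelogram patterns of $c$ in the $(i,j)$-frame) and promote the resulting bound to full columns by a compactness argument over $k$.

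The heart of the matter, and the step I expect to be the main obstacle, is exactly this counting bound, and here naive complexity boundedness cannot work: because $\vec u$ and $-\vec u$ are \emph{both} non-deterministic, adjoining a row to a width-$N$ strip on either side can create genuinely new patterns, so the truncated strip complexity grows with the height rather than saturating. This is precisely where the technique of Cyr and Kra~\cite{cyrkra} (as used in~\cite{szabados}) becomes indispensable, and where the complexity bound $|\Patt{c}{D}|\le nm$ and the rectangular shape of $D$ are essential rather than merely the existence of an annihilator --- consistent with the fact that the full conjecture remains open. The mechanism I would follow is their discrepancy/balanced-block analysis across the nonexpansive line $\ell$: positioning the $n\times m$ rectangle so that it straddles $\ell$ and counting its patterns, two-sided branching forces the observed patterns to proliferate on both sides of $\ell$, and this growth is incompatible with the ceiling $|\Patt{c}{D}|\le nm$ unless distinct one-sided extensions must eventually coincide. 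That forced coincidence, propagated using uniform recurrence of $c$, yields the sought period $q\vec a$ along $\ell$. I would import this counting as the key lemma and spend the bulk of the proof translating it into the present half-plane determinism formulation.
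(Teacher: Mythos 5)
Your proposal correctly identifies that the Cyr--Kra machinery is where the rectangle and the low-complexity bound must enter, but it stops exactly at the point where the proof actually happens: you write that you ``would import this counting as the key lemma'' without stating what that lemma is, and the vague description you give (``two-sided branching forces the observed patterns to proliferate\ldots unless distinct one-sided extensions must eventually coincide'') is not an argument that could be checked or completed. The paper's proof is a short, precise derivation from two concrete statements of~\cite{szabados}: Lemma~\ref{lem:Michal2} (every low-complexity-with-respect-to-a-rectangle configuration admits a $\vec{u}$-balanced \emph{or} a $(-\vec{u})$-balanced set $D$) and Lemma~\ref{lem:Michal4} (if $D$ is $\vec{u}$-balanced in $d$ and some $e\in\overline{{\cal O}(d)}$ agrees with $d$ on the interior $S^\circ$ of a stripe $S$ containing a translate of $D$ but differs on $S$, then $d$ is periodic perpendicular to $\vec{u}$). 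Non-determinism in direction $\vec{u}$ supplies exactly such a pair $d,e$ (agreeing on $H_{\vec{u}}\supseteq S^\circ$, differing at $\vec{0}\in S$), and uniform recurrence is what transfers the balanced property from $c$ to $d$, puts $e$ into $\overline{{\cal O}(d)}$, and carries the resulting period back to $c$. In particular, the reason the hypothesis requires non-determinism in \emph{both} $\vec{u}$ and $-\vec{u}$ is that Lemma~\ref{lem:Michal2} only guarantees a balanced set in one of the two directions, and one needs the witnessing pair in whichever direction that turns out to be; your explanation of the role of two-sidedness (that it makes truncated strip complexity grow) describes an obstacle rather than how the hypothesis is used, and so misses this point.

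Two further remarks. The Morse--Hedlund scaffolding you build (reading $c$ as a bi-infinite sequence of columns and bounding the number of width-$N$ strips by $N$) is not needed and does not match how the imported lemma is used: Lemma~\ref{lem:Michal4} concludes periodicity of $d$ directly, with no one-dimensional reduction, and the strip-counting bound you would need for Morse--Hedlund is not something the hypotheses provide (as you yourself observe). Your preliminary reduction to rational $\vec{u}$ via Lemma~\ref{th:low_complexity}, Theorem~\ref{th:decompo} and Proposition~\ref{prop:detann} is correct but also unnecessary, since the balanced-set lemmas are stated for arbitrary nonzero $\vec{u}\in\Z^2$. As it stands, the proposal is a plan for a proof rather than a proof: the entire content of the proposition lives inside the unspecified ``key lemma.''
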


We will prove this proposition below using lemmas from~\cite{szabados}.
We first recall some definitions, adjusted to our terminology. Let $D\subseteq \Z^2$ be non-empty and let
$\vec{u}\in\Z^2\setminus\{\vec{0}\}$. The \emph{edge} $E_{\vec{u}}(D)$ of $D$ in direction $\vec{u}$ consists of the cells in $D$
that are extremal in the direction $\vec u$:
\[
E_{\vec{u}}(D) = \{ \vec{v}\in D\ |\ \forall \vec{x}\in D\ \inner{\vec{x}}{\vec{u}} \leq \inner{\vec{v}}{\vec{u}} \}.
\]
We call $D$ \emph{convex} if $D=C\cap\Z^2$ for a convex subset $C\subseteq\R^2$ of the real plane.
For $D,E\subseteq \Z^2$ we say that $D$ \emph{fits} in $E$
if $D+\vec{t}\subseteq E$ for some $\vec{t}\in\Z^2$.

The (closed) \emph{stripe} of width $k$ perpendicular to $\vec{u}$ is the set
\[
S_{\vec{u}}^k = \{\vec{x}\in\Z^2\ |\ -k< \inner{\vec{x}}{\vec{u}} \leq 0\}.
\]
Consider the stripe $S=S_{\vec{u}}^k$.
The reader can refer to \cref{fig:regions} for an illustration of a closed stripe, the only difference being the inclusion of the upper boundary of $S$.
Clearly its edge $E_{\vec{u}}(S)$ in direction $\vec{u}$ is the discrete line $\Z^2\cap L$ where $L\subseteq\R^2$
is the real line through $\vec{0}$ that is perpendicular to $\vec{u}$. The \emph{interior} $S^\circ$ of $S$ is $S\setminus E_{\vec{u}}(S)$, that is, $S^\circ = \{\vec{x}\in\Z^2\ |\ -k< \inner{\vec{x}}{\vec{u}} < 0\}$.

A central concept from~\cite{cyrkra,szabados} is the following.  Let $c$ be a configuration and let $\vec{u}\in\Z^2\setminus\{\vec{0}\}$ be a direction. Recall that $\Patt{c}{D}$
 denotes the set of $D$-patterns that appear in $c$. A finite discrete convex set $D\subseteq \Z^2$ is called \emph{$\vec{u}$-balanced in $c$} if the following three conditions are satisfied, where
we denote $E=E_{\vec{u}}(D)$ for the edge of $D$ in direction $\vec{u}$:
\begin{enumerate}
\item[(i)] $|\Patt{c}{D}|\leq |D|$,
 \item[(ii)] $|\Patt{c}{D}| < |\Patt{c}{D\setminus E}|+|E|$, and
\item[(iii)] $|D\cap L|\geq |E|-1$ for every line $L$ perpendicular to $\vec{u}$ such that $D\cap L\neq\emptyset$.
\end{enumerate}
The first condition states that $c$ has low complexity with respect to shape $D$. The second condition implies that
there are fewer than $|E|$ different $(D\setminus E)$-patterns in $c$ that can be extended in more than one way into a $D$-pattern of $c$.
The last condition states that the edge $E$ is nearly the shortest among the parallel cuts across $D$.

\begin{lemma}[Lemma 2 in~\cite{szabados}]
\label{lem:Michal2}
Let $c$ be a two-dimensional configuration that has low complexity with respect to a rectangle, and let $\vec{u}\in\Z^2\setminus\{\vec{0}\}$.
Then $c$ has a $\vec{u}$-balanced or a $(-\vec{u}$)-balanced set $D \subseteq \Z^2$.
\end{lemma}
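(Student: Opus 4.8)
The plan is to recast the three balancedness conditions in terms of the \emph{deficit} $\delta(D) = |\Patt{c}{D}| - |D|$ of a finite set $D$, and then to locate a good $D$ by a discrete intermediate-value argument along direction $\vec{u}$. The monotonicity I will use repeatedly is that $D'\subseteq D$ implies $|\Patt{c}{D'}|\le|\Patt{c}{D}|$, since restriction induces a surjection $\Patt{c}{D}\to\Patt{c}{D'}$. Writing $E=E_{\vec{u}}(D)$ and $|D|=|D\setminus E|+|E|$, condition (i) is exactly $\delta(D)\le 0$, while condition (ii) says that appending the edge $E$ to $D\setminus E$ raises the complexity by strictly less than $|E|$, that is $\delta(D)<\delta(D\setminus E)$. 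So (i)+(ii) together ask for a convex $D$ whose deficit is nonpositive and which was reached from $D\setminus E$ by a \emph{strict drop} of $\delta$.

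To produce such a $D$ I would build a chain of convex sets $D_0\subset D_1\subset\cdots\subset D_N$ in which each $D_{j}$ is obtained from $D_{j-1}$ by adjoining the next discrete line perpendicular to $\vec{u}$ on the $\vec{u}$-side, so that the added line is precisely $E_{\vec{u}}(D_j)$ and $\delta(D_j)-\delta(D_{j-1}) = \bigl(|\Patt{c}{D_j}|-|\Patt{c}{D_{j-1}}|\bigr)-|E_{\vec{u}}(D_j)|$. I start the chain at a set so small (a single line, or a single cell) that $\delta(D_0)\ge 0$ --- automatic unless $c$ is constant, a trivial case --- and arrange it to terminate at the rectangle $R$ witnessing low complexity, for which $|\Patt{c}{R}|\le|R|$ gives $\delta(D_N)\le 0$. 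Since $\delta$ starts nonnegative, ends nonpositive, and changes by finite steps, the first index $k$ with $\delta(D_k)\le 0$ satisfies $\delta(D_{k-1})>0\ge\delta(D_k)$, hence a strict drop at step $k$. This $D_k$ then satisfies (i) and (ii) for direction $\vec{u}$ by construction.

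It remains to secure condition (iii), and this is where I expect the real work and the source of the $\vec{u}$-versus-$(-\vec{u})$ disjunction. The key geometric fact is that for convex $D$ the cut lengths $t\mapsto|D\cap\{\inner{\vec{x}}{\vec{u}}=t\}|$ are unimodal, so their minimum over the support of $D$ is attained at one of the two extreme cuts, namely the $\vec{u}$-edge or the $(-\vec{u})$-edge. Consequently (iii) for $\vec{u}$ amounts to the $\vec{u}$-edge being at most one cell longer than the $(-\vec{u})$-edge, and for any single convex set at least one of $\pm\vec{u}$ satisfies its own version of (iii). The obstacle is that the first-passage construction forces the growth --- and hence the edge on which (ii) is certified --- onto a prescribed side, which need not be the side whose extreme edge is shortest. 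I would resolve this by exploiting that a rectangle has \emph{both} extreme cuts (in any direction) among its shortest, and by choosing the convex family so that throughout the chain the edge on the growth side stays within one cell of the shortest cut: when this can be maintained on the $\vec{u}$-side we obtain a $\vec{u}$-balanced set, and when the unimodal profile instead keeps the $(-\vec{u})$-edge shortest we run the symmetric construction to obtain a $(-\vec{u})$-balanced set. Carrying out this synchronization of the combinatorial crossing with the geometric shortest-edge requirement, while verifying convexity and the edge identities at every step, is the main technical hurdle; the combinatorial core (i)+(ii) is comparatively routine once the deficit reformulation is in place.
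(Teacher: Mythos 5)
First, a point of reference: the paper does not prove this lemma at all --- it is imported verbatim as Lemma~2 of~\cite{szabados} --- so your attempt can only be judged on its own merits, not against an in-paper argument. Your deficit reformulation is correct (note that as printed, condition (ii) in the paper has the inequality on the trivially-true side; the intended condition, which you correctly use, is $|\Patt{c}{D}| < |\Patt{c}{D\setminus E}| + |E|$, i.e.\ $\delta(D) < \delta(D\setminus E)$), and the first-passage argument does deliver (i) and (ii) simultaneously. The base case is also less fragile than you suggest: if a set $D$ consisting of a single discrete line perpendicular to $\vec{u}$ already has $\delta(D)\le 0$, then $E_{\vec{u}}(D)=D$, conditions (ii) and (iii) degenerate, and $D$ is itself balanced, so non-constancy of $c$ is not the right dichotomy but the degenerate cases do take care of themselves.

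The genuine gap is exactly where you flag it, and it is not a deferrable technicality --- it is the content of the lemma. Your chain of half-rectangles $D_k = R\cap\{\vec{x} : \inner{\vec{x}}{\vec{u}}\le t_k\}$ \emph{provably} violates (iii) for generic $\vec{u}$: the $\vec{u}$-edge of $D_k$ is a middle cut of $R$, which for non-axis-parallel $\vec{u}$ can be long, while $D_k$ still contains the length-one corner cuts near its $(-\vec{u})$-end; so "every cut has length $\ge |E|-1$" fails except when $|E|\le 2$. Your proposed repair ("choose the family so the growth-side edge stays within one cell of the shortest cut") is precisely the missing construction, and a one-sided chain cannot achieve it because the side on which you grow is the side on which (ii) gets certified, while (iii) is controlled by whichever extreme edge happens to be shorter. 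The standard way out inverts your argument: \emph{shrink} from $R$ instead of growing to it. Maintain the invariant $|\Patt{c}{D}|\le|D|$, and observe (a) by near-concavity of the cut-length profile of a discrete convex set, every cut has length at least $\min(|E_{\vec{u}}(D)|,|E_{-\vec{u}}(D)|)-1$, so (iii) always holds for whichever of $\pm\vec{u}$ has the shorter edge; and (b) if (ii) \emph{fails} on that side, then $|\Patt{c}{D\setminus E}|\le|\Patt{c}{D}|-|E|\le|D\setminus E|$, so deleting that edge preserves the invariant and convexity. If no set in this process is balanced, the deletions continue past the last nonempty set, forcing $|\Patt{c}{\emptyset}|\le 0$, a contradiction. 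This ordering resolves the synchronization you could not: at each step you are free to certify (ii)'s failure on the side where (iii) already holds, and the choice of side is what produces the $\vec{u}$-versus-$(-\vec{u})$ disjunction in the statement. (Your idea can likely be salvaged by growing outward from the longest cut of $R$, always appending the shorter of the two candidate next lines, but that choice --- not the one-sided chain you describe --- is the actual proof.)
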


A crucial observation in~\cite{cyrkra} connects balanced sets and non-determinism to periodicity. This leads to the following
statement.

\begin{lemma}[Lemma 4 in~\cite{szabados}]
\label{lem:Michal4}
Let $d$ be a two-dimensional configuration and let $\vec{u}\in\Z^2\setminus\{\vec{0}\}$ be such that
$d$ admits a $\vec{u}$-balanced set $D \subseteq \Z^2$.
Assume there is a configuration $e\in\overline{{\cal  O}(d)}$ and a stripe  $S=S_{\vec{u}}^k$ perpendicular to $\vec{u}$
such that $D$ fits in $S$ and $d|_{S^\circ}=e|_{S^\circ}$ but $d|_{S}\neq e|_{S}$. Then $d$ is periodic in direction perpendicular to $\vec{u}$.
\end{lemma}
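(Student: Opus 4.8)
The plan is to show that the disagreement between $d$ and $e$, which the hypotheses confine to the extremal line of the stripe, repeats periodically along that line and ultimately forces a period of $d$. First I would pin down the geometry. Writing $E_{\vec u}(S)=L$ for the edge of the stripe, i.e. the discrete line $\{\vec x : \inner{\vec x}{\vec u}=0\}$, the hypotheses $d|_{S^\circ}=e|_{S^\circ}$ and $d|_S\neq e|_S$ force $d|_L\neq e|_L$; I fix a cell $\vec p\in L$ with $d_{\vec p}\neq e_{\vec p}$. Let $\vec w$ be a primitive vector of the one-dimensional lattice $L\cap\Z^2$, so that ``periodic perpendicular to $\vec u$'' means invariance under some $\tau^{s\vec w}$. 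Because $D$ fits in $S$ and $S$ is unbounded in the $\vec w$-direction, I can slide $D$ freely along $\vec w$ while keeping its edge $E=E_{\vec u}(D)$ inside $L$ and the rest $D\setminus E$ inside the interior $S^\circ$; convexity of $D$ makes $E$ a contiguous block of $|E|$ cells on $L$.

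Second, I would set up the ambiguity count that is the whole reason condition (ii) is imposed. For any placement $\vec t$ with $E+\vec t\subseteq L$, the lower part sits in $S^\circ$, so $d$ and $e$ carry the \emph{same} $(D\setminus E)$-pattern $\pi_{\vec t}=d|_{(D\setminus E)+\vec t}=e|_{(D\setminus E)+\vec t}$ there. Both $d|_{D+\vec t}$ and $e|_{D+\vec t}$ are $D$-patterns of $d$ — the latter because $e\in\overline{{\cal O}(d)}$, so every pattern of $e$ already occurs in $d$ — and they share the interior $\pi_{\vec t}$ but may differ on the edge. Hence whenever the window $E+\vec t$ meets a disagreement cell, $\pi_{\vec t}$ is \emph{ambiguous}: it extends to at least two distinct $D$-patterns of $d$. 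By balanced condition (ii) there are fewer than $|E|$ distinct ambiguous $(D\setminus E)$-patterns. The exactly $|E|$ placements whose edge window covers the fixed cell $\vec p$ therefore all produce ambiguous interior patterns, so by pigeonhole two of them, at placements differing by $s\vec w$ with $0<s<|E|$, carry the \emph{same} interior pattern: $d|_{(D\setminus E)+\vec t}=d|_{(D\setminus E)+\vec t+s\vec w}$.

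Finally, I would upgrade this single local coincidence to a genuine period of $d$. The key observation is that off the fewer-than-$|E|$ ambiguous patterns the shape acts deterministically: an unambiguous $(D\setminus E)$-pattern of $d$ has a \emph{unique} edge-extension inside $d$, so matching interiors forces matching edges, i.e. matching of the next line in direction $\vec u$. Feeding the coincidence $d|_{(D\setminus E)+\vec t}=d|_{(D\setminus E)+\vec t+s\vec w}$ through this extension rule, and using balanced condition (iii) — every perpendicular cut of $D$ has at least $|E|-1$ cells, so the shape never pinches and successive shifted copies of $D$ overlap in enough cells to carry the match both along $\vec w$ and across consecutive lines — one shows that $\tau^{s\vec w}$ fixes $d$ line by line, whence $\tau^{s\vec w}(d)=d$. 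This propagation, turning one repeated interior pattern into invariance of the whole configuration while keeping the sparse ambiguities from breaking the match, is precisely the Cyr--Kra mechanism and is the main obstacle; the geometric reduction and the pigeonhole of the earlier steps are routine by comparison.
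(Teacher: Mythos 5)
You should first be aware that the paper does not prove this lemma at all: it is imported verbatim as Lemma~4 of the cited reference of Szabados, so there is no in-paper proof to compare against, and your attempt has to stand on its own. Your first two steps do stand: sliding $D$ so that its edge $E$ lies on the boundary line $L$ of the stripe, observing that every placement whose edge window meets a disagreement cell of $d$ and $e$ yields an ambiguous $(D\setminus E)$-pattern (here you correctly use $e\in\overline{{\cal O}(d)}$ to see that $e$'s patterns are $d$'s patterns), and pigeonholing the $|E|$ placements covering a fixed disagreement cell against the fewer than $|E|$ ambiguous patterns to get $d|_{(D\setminus E)+\vec{t}}=d|_{(D\setminus E)+\vec{t}+s\vec{w}}$ with $0<s<|E|$. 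This is the standard Cyr--Kra opening and is correct.

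The gap is in your third step, and it is fatal as written. The coincidence produced by the pigeonhole is between two placements carrying the same \emph{ambiguous} interior pattern, while your propagation mechanism (``matching interiors force matching edges'') applies only to \emph{unambiguous} patterns; ambiguous patterns are by definition exactly those for which this fails, and $d$ versus $e$ themselves witness that one ambiguous interior can sit under two different edges. So the induction cannot take even its first step: from $d|_{(D\setminus E)+\vec{t}}=d|_{(D\setminus E)+\vec{t}+s\vec{w}}$ you cannot conclude $d|_{E+\vec{t}}=d|_{E+\vec{t}+s\vec{w}}$. Even granting a starting coincidence of full $D$-patterns, condition~(ii) bounds the number of distinct ambiguous \emph{patterns}, not the number of \emph{positions} at which they occur, so ambiguous placements may recur arbitrarily often along $L$ and on every subsequent line, and each occurrence breaks the ``line by line'' transfer; your appeal to condition~(iii) (``the shape never pinches'') is not a mechanism for repairing this. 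Finally, your argument only ever inspects the single stripe $S$, whereas the conclusion $\tau^{s\vec{w}}(d)=d$ is global. The actual proof in the cited reference replaces this propagation by a considerably more delicate analysis of where and how often ambiguity can occur (this is where conditions (i) and (iii) really enter), and that analysis --- which you explicitly defer as ``the main obstacle'' --- is the entire content of the lemma. As it stands, the proposal is a correct reduction to the hard step, not a proof.
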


With these we can prove Proposition~\ref{prop:case2}.

\begin{proof}[Proof of Proposition~\ref{prop:case2}]
Let $c$ be a two-dimensional uniformly recurrent configuration that has low complexity with respect to a rectangle.
Let $\vec{u}$ be such that both $\vec{u}$ and $-\vec{u}$ are directions of non-determinism in $\overline{{\cal  O}(c)}$.
By Lemma~\ref{lem:Michal2} configuration  $c$ admits a
 $\vec{u}$-balanced or a $(-\vec{u}$)-balanced set $D \subseteq \Z^2$. Without loss of generality, assume that $D$ is
 $\vec{u}$-balanced in $c$. As $\overline{{\cal  O}(c)}$ is non-deterministic in direction $\vec{u}$, there are
 configurations $d,e\in \overline{{\cal  O}(c)}$ such that $d|_{H_{\vec{u}}}=e|_{H_{\vec{u}}}$ but $d_{(0,0)}\neq e_{(0,0)}$.
 Because $c$ is uniformly recurrent, exactly the
 same finite patterns appear in $d$ as in $c$. This means that $D$ is
 $\vec{u}$-balanced also in $d$. From the uniform recurrence of $c$ we also get that $e\in \overline{{\cal  O}(d)}$.
 Pick any $k$ large enough so that $D$ fits in the stripe  $S=S_{\vec{u}}^k$. Because $\vec{0}\in S$ and $S^\circ\subseteq H_{\vec{u}}$,
 the conditions in Lemma~\ref{lem:Michal4} are met. By the lemma, configuration $d$ is $\vec{p}$-periodic for some $\vec{p}$ that is perpendicular to $\vec{u}$. Because $d$ has the same finite patterns as $c$, it follows that $c$ cannot contain a pattern that breaks period $\vec{p}$.
 So $c$ is also $\vec{p}$-periodic.
 \end{proof}

 Now Theorem~\ref{thm:periodic} follows from Propositions~\ref{prop:case1} and \ref{prop:case2}, using Theorem~\ref{thm:main} and the fact
 that every subshift contains a uniformly recurrent configuration.

 \begin{proof}[Proof of Theorem~\ref{thm:periodic}] Let $c$ be a two-dimensional configuration that has low complexity with respect to a rectangle.
 Replacing $c$ by a uniformly recurrent element of $\overline{{\cal  O}(c)}$, we may assume that $c$ is uniformly recurrent.
 Since $c$ is a low-complexity configuration, by Lemma~\ref{th:low_complexity} it has a non-trivial annihilator. By Theorem~\ref{thm:main}
 there exists $c'\in \overline{{\cal  O}(c)}$ such that $\overline{{\cal  O}(c')}$ has no direction of one-sided determinism.
 If all directions are deterministic in $\overline{{\cal  O}(c')}$, it follows from Proposition~\ref{prop:case1} that $c'$ is two-periodic.
 Otherwise there is a direction $\vec{u}$ such that both $\vec{u}$ and $-\vec{u}$ are directions of non-determinism in $\overline{{\cal  O}(c')}$. Now it follows from Proposition~\ref{prop:case2} that $c'$ is periodic.
 \end{proof}


\section{Recoding Wang tiles}
\label{sec:recoding}

In this section we prove \cref{thm:main_recoding}.
We convert an arbitrary Wang tile set $T$ into a pretty small set $P$ of binary rectangular allowed patterns that is equivalent to $T$
in the sense that $P$ admits a (periodic) configuration if and only if $T$ admits a (resp. periodic) configuration. Configurations valid for P have bits $1$
sparsely positioned so that each bit $1$ represents a single Wang tile of a valid tiling, and
the relative positions of bits $1$ uniquely identify the corresponding Wang tiles. Allowed patterns in $P$ are restricted so that only
matching Wang tiles are allowed next to each other.
We detail this construction in the next pages.

So let $T$ be a given finite set of Wang tiles. We first modify the set to make sure that no tile matches itself as its neighbor.
This is easy to enforce by making two copies of $T$ and forcing the copies be used alternatingly on even and odd cells. More precisely,
we replace $T$ by the cartesian product $T\times\{\mbox{{\sc even}, {\sc odd}}\}$ where {\sc even} has color $0$ on its north and
east sides and color $1$ on south and west, while in  {\sc odd} the colors are reversed. The {\sc even}/{\sc odd} -components of tiles form an infinite checkerboard tiling of the plane. The new tile set admits a (periodic) tiling if and only if $T$ admits a (periodic, resp.) tiling.

From now on we assume that no tile of $T$ matches in color with itself. Let $t=|T|$ be the number of tiles, and denote
$$S=\{2^j-1\ |\ j=0,1,\dots ,t-1\}$$ and $s=2^{t-1}$. The set $S\subseteq\llbracket s\rrbracket $ has the property that for $a,b\in S$, $a\neq b$, the difference $a-b$
uniquely identifies both $a$ and $b$. The proof of this fact is easy.
\begin{lemma}
\label{lem:unique}
For $a_1,a_2,b_1,b_2\in S$, if $a_1-b_1 = a_2-b_2 \neq 0$ then $a_1=a_2$ and $b_1=b_2$.
\end{lemma}
Fix a bijection $\alpha:T\longrightarrow S$. In our coding tile $t$
will be represented as a horizontal sequence of $s$ bits where bit number $\alpha(t)$ is set to $1$ and all other bits are $0$'s.

Choose $N=3s$ and
fix $m\geq 2$ and $n\geq N$, the dimensions of the rectangular patterns considered, and define $$D=\llbracket n\rrbracket \times \llbracket m\rrbracket.$$
Denote $n'=n-s$ and $m'=m-1$. In our coding of a Wang tiling we paste to position
$(i\cdot n', j\cdot m')$
the bit sequence representing the Wang tile in position $(i,j)$.
A configuration $c\in T^{\Z^2}$
is then represented as a binary configuration $\beta(c)\in \{0,1\}^{\Z^2}$ where for all $(i,j)\in\Z^2$, tile $c(i,j)$ contributes
symbol $1$ in position $(in'+\alpha(c(i,j)), jm')$. All positions without a contribution from any tile of $c$ have value $0$.

In $\beta(c)$ all symbols $1$ appear in the intersections of vertical strips
$$
V_i=(in'+\llbracket s\rrbracket) \times \Z
$$
and horizontal strips
$$
H_j=\Z\times \{jm'\},
$$
for $i,j\in\Z$. There is exactly one symbol $1$ in each intersection $I_{i,j}=V_i\cap H_j$, representing the Wang tile in position $(i,j)$.
See Figure~\ref{fig:coding} for an illustration.

\begin{figure}[htb]
\includegraphics[scale=0.38]{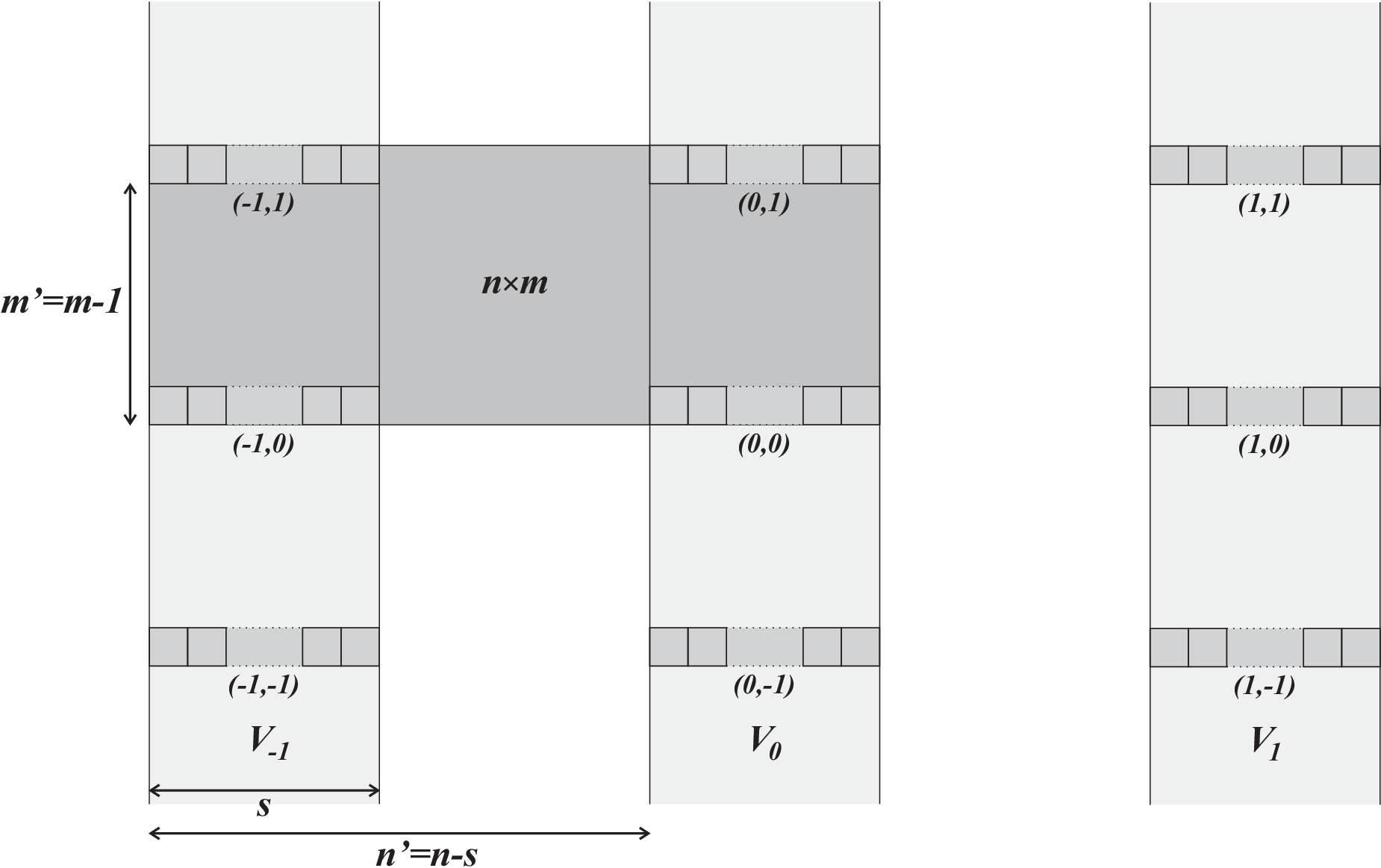}
%
%
\caption{The positioning of the horizontal $s$-bit encodings of Wang tiles in coding $\beta$.
The given coordinates indicate the positions in the Wang tiling that are encoded in the corresponding bit sequences.
A sample rectangle of size $n\times m$ is depicted in dark shading.
Three consecutive vertical $V_i$ strips are highlighted. }
\label{fig:coding}       
\end{figure}

Let us first count all rectangular $n\times m$ patterns that may appear in $\beta(c)$ for some $c\in T^{\Z^2}$, that is, find an upper bound on the cardinality of the set
$$
Q = \bigcup_{c\in T^{\Z^2}} \Patt{\beta(c)}{D}.
$$
As $n=n'+s$ we have that
for all $j\in\Z$ there is $i\in\Z$ such that
$in'+\llbracket s\rrbracket \subseteq j+\llbracket n\rrbracket$, that is,
every $n\times m$ rectangle on the grid fully intercepts one of the vertical strips $V_i$. Analogously,
the rectangle intercepts a horizontal strip $H_j$ and hence some $I_{i,j}$ is fully contained in the rectangle.
This implies that every pattern in $Q$ contains at least one symbol $1$. On the other hand, $n\leq 2n'-s$ so that an $n\times m$ rectangle can not
intersect with more than two strips $V_i$, and analogously it cannot intersect more than two horizontal strips $H_j$.
This means that there are at most four symbols $1$ in each pattern of $Q$.

Let $p\in Q$ and let $c\in T^{\Z^2}$ be such that $p\in \Patt{\beta(c)}{D}$. Let $E=\vec{u}+D$ be a rectangle containing pattern $p$
in $\beta(c)$. We have the following four possibilities.
\begin{itemize}
\item Suppose that $E$ has a non-empty intersection with two consecutive vertical strips $V_i$ and $V_{i+1}$ and with two consecutive
horizontal strips $H_j$ and $H_{j+1}$. Rectangle $E$ can be positioned in at most $2s$ positions relative to these strips, and there at most $t^4$ choices of the Wang tiles encoded in the intersections of the two horizontal and two vertical strips. This means that there are at most $2st^4$ patterns $p$ that can be extracted this way.
\item Suppose that  $E$ has non-empty intersection with two consecutive vertical strips $V_i$ and $V_{i+1}$ and with only one horizontal strip $H_j$. There are at most $2sm$ ways to position the rectangle and at most $t^2$ choices for the two tiles
    encoded within the block. There are hence at most $2smt^2$ patterns $p$ of this type.
\medskip

\item Symmetrically, if $E$ has non-empty intersection with two consecutive horizontal strips $H_j$ and $H_{j+1}$ and with only one vertical strip $V_j$ then the number of extracted patterns is bounded by $nt^2$.
\item Finally, if $E$ only intersects a single vertical and horizontal strip then $E$ contains a single symbol $1$. There are
at most $nm$ positions for this $1$ inside the $n\times m$ rectangle.
\end{itemize}
Adding up the four cases above gives the upper bound
$$
nm+2st^4+2st^2m+t^2n \leq nm+k(n+m)
$$
for the cardinality of $Q$, where we can choose $k=2st^4$, assuming $t\geq 1$.
This choice of $k$ works by a direct calculation due to $t^2\geq 1$, $st^2\geq 1$ and $n\geq 2$: subtracting the left-hand-side from
the right-hand-side yields
$$
2st^4(n+m)-(2st^4+2st^2m+t^2n)=2st^2(t^2-1)m+t^2(2st^2-1)(n-1)-t^2\geq 0.
$$
Note that constant $k=2st^4=|T|^4\cdot 2^{|T|}$ does not depend on $n$ or $m$ but only on the number of tiles in $T$. Note also that the patterns in $Q$ can be effectively constructed. We have established the following result.

\begin{lemma}
\label{lem:count}
The number of different $n\times m$ patterns that appear in $\beta(c)$ over all $c\in T^{\Z^2}$ is at most $nm+k(n+m)$ for
$k=|T|^4\cdot 2^{|T|}$. These patterns can be effectively constructed for a given $T$.

\end{lemma}

\noindent
\emph{Remark}\
A smaller constant $k$ can be obtained by using
a more succinct representation $\alpha$ of tiles $T$ as numbers. One just needs to encode
tiles as natural numbers whose differences $a-b$ identify uniquely $a$ and $b$, so that Lemma~\ref{lem:unique} is satisfied.
Instead of the exponentially growing sequence of representatives $0,1,3,7,15,\dots$  that we use here one can use, for example,
numbers of the Mian-Chowla sequence $1, 2, 4, 8, 13, 21, 31, \dots$ (sequence A005282 in \cite{OEIS}) that only grows polynomially. Then constant $k$
will be bounded by a polynomial of $|T|$.
\medskip

Let us further limit the allowed patterns by removing from $Q$ patterns that contain two $1$'s whose relative positions indicate
neighboring Wang tiles whose colors do not match. More precisely, let $p\in Q$.
\begin{itemize}
\item[(H)] Suppose $p$ contains on some row two symbols $1$, in columns $i$ and $j$, for $i<j$.
In order for $p$ to appear in $\beta(c)$ for some valid tiling $c$ we necessarily must have that
the two symbols $1$ are the contributions of two matching horizontally neighboring tiles in $c$,
so that $i=k+\alpha(a)$ and $j=k+n'+\alpha(b)$ for some integer $k$ and tiles $a,b\in T$
such that the east color of $a$ is the same as the west color of $b$. Hence we remove $p$ from $Q$
if no matching $a,b$ exist such that $j-i=n'+\alpha(b)-\alpha(a)$.
\item[(V)] Suppose $p$ contains symbol $1$ in some column $i$ of the bottom row and some column $j$
of the top row where $i-s<j<i+s$. Now $p$ can appear in $\beta(c)$
only if the two symbols $1$ are the contributions of two vertically neighboring tiles in $c$,
so that $i=k+\alpha(a)$ and $j=k+\alpha(b)$ for some integer $k$ and tiles $a,b\in T$
such that the north color of $a$ is the same as the south color of $b$. We remove $p$ from $Q$
if no matching $a,b$ exist such that $j-i=\alpha(b)-\alpha(a)$.
\end{itemize}
Let $P$ be the set of patterns in $Q$ that are not removed by the conditions (H) and (V) above.
Set $P$ can be effectively constructed and, since $P\subseteq Q$, the upper bound $|P| \leq nm+k(n+m)$ from Lemma~\ref{lem:count}
holds.

Let us next prove that allowing the patterns in $P$ admits precisely the configurations $\beta(c)$ and all their
translates, for all $c\in T^{\Z^2}$ that are valid Wang tilings.

\begin{lemma}
\label{lem:correctness}
With the notations above,
$$\SFT{P} = \{\tau^{\vec{t}}(\beta(c))\ |\ \vec{t}\in\Z^2 \mbox{ and } c\in \Valid{T}\}.$$
\end{lemma}
\begin{proof}
By the definition of $P$ it is clear that for every valid tiling $c\in \Valid{T}$ the encoded configuration $\beta(c)$
only contains allowed patterns in $P$. Hence the inclusion ``$\supseteq$'' holds.

To prove the converse inclusion, consider an arbitrary configuration $e\in\SFT{P}$, that is, $e\in \{0,1\}^{\Z^2}$ such that
$\Patt{e}{D}\subseteq P$. Every pattern in $P$ contains a symbol $1$ so configuration $e$ must contain a symbol $1$ in every $n\times m$ block.

Let us denote, for any $x,y,z,w\in T$, by $\nelio{x}{y}{z}{w}$ the $n\times m$ binary pattern with exactly four $1$'s, two of which are on the bottom row in
columns $\alpha(x)$ and $n'+\alpha(y)$, and two are on the topmost row in columns $\alpha(z)$ and $n'+\alpha(w)$. In other words, the bit
sequences that encode tiles $x,y,z$ and $w$ are in the four corners of the pattern, as in the dark grey block in
Figure~\ref{fig:coding}. Let us call $\nelio{x}{y}{z}{w}$ a \emph{standard block} if the Wang tiles $x,y,z,w$ match each other in
colors as a $2\times 2$ pattern with
$x,y,z$ and $w$ at the lower left, lower right, upper left and upper right position of the $2\times 2$ pattern, respectively.

Consider now any occurrence of a symbol $1$ in $e$, that is, $\vec{u}\in\Z^2$ such that $e(\vec{u})=1$.
Let us prove that there is a standard block  $\nelio{x}{y}{z}{w}$  in $e$ with this occurrence of $1$ representing Wang tile $x$.
Let $p=\tau^{\vec{u}}(e)_{|D}$
be the $n\times m$ pattern with lower left corner at cell $\vec{u}$, so there is a symbol $1$ at the lower left corner of $p$.
By the definition of $P$, pattern $p$ appears in
$\beta(f)$ for some $f\in T^{\Z^2}$. The structure of $\beta(f)$ implies that there is another symbol $1$ in
pattern $p$ on the same horizontal row, say $i$ position to the right of the lower left corner.
By condition (H) above, $i=n'+\alpha(y)-\alpha(x)$ for some tiles $x,y\in T$
such that the east color of $x$ is the same as the west color of $y$. Because no tile in $T$ matches with itself in color,
we have $x\neq y$ and hence $x$ and $y$ are unique by Lemma~\ref{lem:unique}.

Let $\vec{v}=\vec{u}-(\alpha(x),0)$, and extract the $n\times m$
pattern $q=\tau^{\vec{v}}(e)_{|D}$ located $\alpha(x)$ positions to the left of $p$ in $e$. Pattern $q$
contains symbol  $1$ on the bottom row at columns $\alpha(x)$ and $n'+\alpha(y)$. Pattern $q$ appears in
$\beta(f')$ for some $f'\in T^{\Z^2}$ and therefore, due to the structure of
encoded configurations, $q$ must be $\nelio{x}{y}{z}{w}$ for some $z,w\in T$. Conditions (H) and (V) then ensure that
$x, y, z$ and $w$ match in color with each other to form a valid $2\times 2$ pattern of Wang tiles, so
$q=\nelio{x}{y}{z}{w}$ is a standard block.

We have seen that any occurrence of $1$ in $e$ represents a Wang tile $x$ in the lower left corner of some
standard block $q=\nelio{x}{y}{z}{w}$ in $e$.
With an analogous reasoning we see that the same occurrence of bit $1$
is also encoding  the tile $z'$ at the upper left corner of a standard  block $q'=\nelio{x'}{y'}{z'}{w'}$ in $e$.
In $q'$ the symbol $1$ that represents the Wang tile $w'$ at the upper right corner is the same as
the one that represents tile $y$ at the lower right corner in $q$, so that
$\alpha(x)-\alpha(y)=\alpha(z')-\alpha(w')$. By Lemma~\ref{lem:unique} the tiles are unique so that
$x=z'$ and $y=w'$. (See Figure~\ref{fig:proofpic} for an illustration.)
We have that $q'=\tau^{\vec{v'}}(e)_{|D}$ for $\vec{v'}=\vec{v}-(0,m')$.

Analogously, symbol $1$ in position $\vec{u}$ is also in the lower right and upper right corners of standard blocks
in $e$ that overlap with $q$ and $q'$ in two encoded Wang tiles that by Lemma~\ref{lem:unique} are uniquely identified
as $x$ and $z$ and as $x'$ and $z'=x$, respectively. So also the $n\times m$ blocks in $e$ with lower left corners at cells
$\vec{v}-(n',0)$ and $\vec{v}-(n',m')$ are standard blocks.

As cell $\vec{u}$ is any position containing bit $1$ in configuration $e$, we can repeat the reasoning
on the other corners of the standard blocks. By easy induction we see that
$\tau^{\vec{v}_{i,j}}(e)_{|D}$ is a standard block for all $i,j\in\Z$ where $\vec{v}_{i,j}=\vec{v}+(in',jm')$.
We now take $c\in T^{\Z^2}$ such that $c(i,j)$ is the Wang tile encoded in $e$ position $\vec{v}_{i,j}$, for all $i,j\in\Z$,
that is, the unique $t\in T$ such that $\tau^{\vec{v}_{i,j}}(e)(\alpha(t))=1$.
Clearly $\tau^{\vec{v}}(e)=\beta(c)$. Because standard blocks
correspond to correctly tiled $2\times 2$ blocks of Wang tiles we have that
$c\in \Valid{T}$.
\end{proof}

\begin{figure}[htb]
\begin{center}
\includegraphics[scale=0.38]{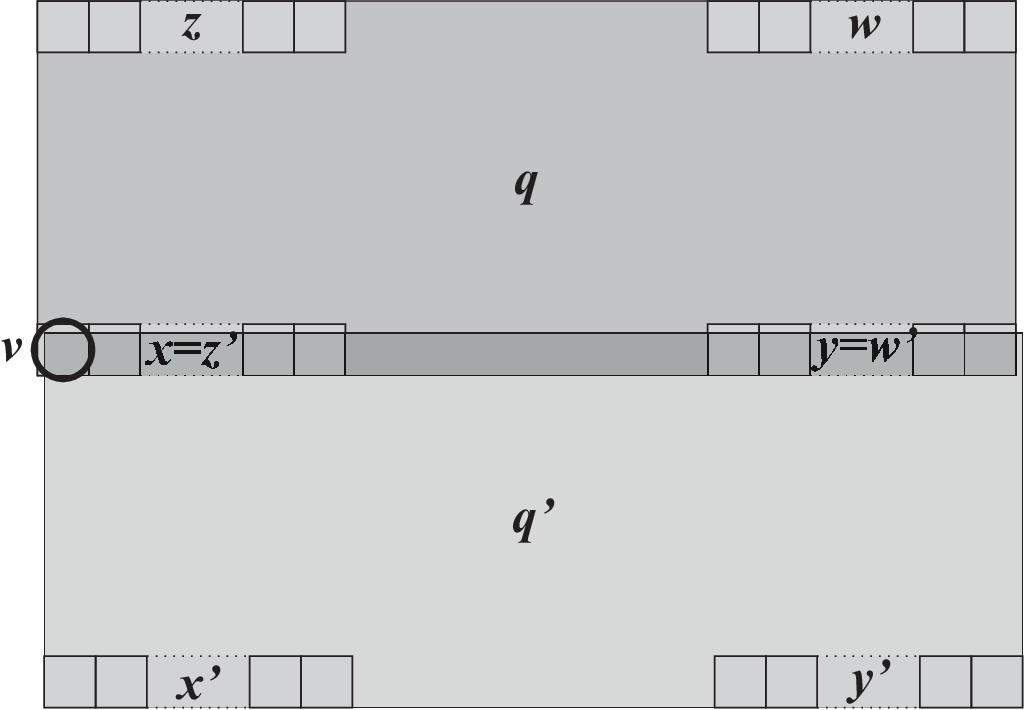}
\end{center}
\caption{Two standard blocks sharing an encoded tile at their lower left and upper left corners, respectively.
The positions of the blocks are uniquely identified by their common row,
as discussed in the proof of Lemma~\ref{lem:correctness}. The circled cell is the position $\vec{v}$ in configuration $e$ in that proof.}
\label{fig:proofpic}       
\end{figure}

We are now ready to prove \cref{thm:main_recoding}.

\begin{proof}[Proof of \cref{thm:main_recoding}]
We first construct an
equivalent tile set $T'$ where no tile matches in color with itself, as shown in the beginning of the section.
We then set $t=|T'|$, $s=2^{t-1}$, $N=3s$ and $k=2st^4$. Let $n\geq N$ and $m\geq 2$ be arbitrary, and let us construct $P$
as above. By Lemma~\ref{lem:count} set $P$ contains at most $nm+k(n+m)$ patterns. By Lemma~\ref{lem:correctness} we have that
$\SFT{P}=\emptyset$ if and only if $\Valid{T}=\emptyset$. Encoding $\beta$ maps periodic configurations to
periodic configurations so also by Lemma~\ref{lem:correctness} there is a periodic configuration in $\SFT{P}$ if and only if
there is a periodic configuration in $\Valid{T}$.
\end{proof}


\section{Conclusions}
\label{sec:conclusions}

We have demonstrated how the low local complexity assumption enforces global regularities in the valid configurations, yielding algorithmic decidability
results. The results were proved in full detail for low complexity configurations with respect to an arbitrary rectangle. The reader can easily verify that
the fact that the considered shape is a rectangle is not used in any proofs presented here, and the only quoted result that uses this fact is Lemma~\ref{lem:Michal2}.
A minor modification in the proof of Lemma~\ref{lem:Michal2} presented in~\cite{szabados} yields that the lemma remains true for any
two-dimensional configuration that has low complexity with respect to any convex shape. We conclude that also \cref{thm:periodic}, \cref{cor:corperiodic}, \cref{cor:cordecidable} and \cref{cor:corminimal} remain true if
we use any convex discrete shape in place of a rectangle.

If the considered shape is not convex the situation becomes more difficult.
Theorem~\ref{thm:periodic} is not true for an arbitrary shape in place of the rectangle but
all counter examples we know are based on periodic sublattices~\cite{cassaigne,karimoutot}. For example, even lattice cells may form a configuration that is horizontally but not vertically
periodic while the odd cells may have a vertical but no horizontal period. Such a non-periodic configuration may be uniformly recurrent and have low complexity with respect to
a scattered shape $D$ that only sees cells of equal parity. It remains an interesting direction of future study to determine if a sublattice structure is the only
way to contradict  Theorem~\ref{thm:periodic} for arbitrary shapes. We conjecture that Corollaries~\ref{cor:corperiodic} and \ref{cor:cordecidable}
hold for arbitrary shapes, that is, that there does not exist a two-dimensional low complexity aperiodic SFT. A special case of this is the recently solved
periodic cluster tiling problem~\cite{bhattacharya,szegedy}.

\cref{cor:goles_1} naturally raises the question whether the additive term $f(n)n$ can be replaced by some constant,
or can at least $f(n)$ in it be replaced by a constant. By \cref{cor:cordecidable} we know that constant $c=0$ does not work, but some other constant might work.

Naturally, there exists $m,n,c$ and a set $P$ of $nm+c$ allowed patterns such that $\SFT{P}$ is aperiodic (take for example any aperiodic SFT and choose $c$ accordingly).
It is not known what might be the smallest such $c$, and this constitutes an interesting open problem. We just know that $c>0$ by \cref{cor:corperiodic}.



\bibliography{biblio}

\end{document}